\documentclass[conference]{IEEEtran}
\usepackage{cite}
\usepackage{amsmath}
\usepackage{enumitem}

\usepackage{graphicx}
\usepackage{subcaption} 
\usepackage{amsthm}
\newtheorem{theorem}{Theorem}  
\newtheorem{lemma}[theorem]{Lemma}      

\newtheorem{proposition}[theorem]{Proposition}
\newtheorem{definition}{Definition}
\newtheorem{example}{Example}

\usepackage{comment}

\usepackage{fmtcount}
\usepackage{amssymb}
\begin{document}
\title{Entropy Functions on Two-Dimensional Faces of Polymatroid Region Spanned by a Matroid and a Rank-One Matroid}
\author{Kaizhe He and Qi Chen \\
	\IEEEauthorblockA{\textit{School of Telecommunications Engineering} \\
		\textit{Xidian University}\\
		Shaanxi, China \\
		kz.he@stu.xidian.edu.cn, qichen@xidian.edu.cn}
	}

\maketitle
\begin{abstract}
Characterization of entropy functions is of fundamental importance in information theory. By imposing constraints on their Shannon outer bound, i.e., the polymatroidal region, one obtains the faces of the region and entropy functions on them with special structures. In this paper, we characterize entropy functions on 2-dimensional faces of polymatroidal region of degree n spanned by a matroid and a rank-1 matroid. We classify all such 2-dimensional faces into four types.
\end{abstract}

\section{Introduction}
Let $N_n=\{1,2,...,n\}$ and $(X_i,i\in N_n)$ be a random vector with each $X_i$ distributed on $\mathcal{X}_i$. The entropy function of $(X_i,i\in N_n)$  is defined as a set function $\mathbf{h}: 2^{N_n} \to \mathbb{R}$ with $\mathbf{h}(A) = H(X_A)$ for each $A \subseteq N_n$. The Euclidean space $\mathcal{H}_n \triangleq \mathbb{R}^{2^{N_n}}$ in which such functions reside is referred to as the $\mathit{entropy}$ $\mathit{space}$ of degree $n$. The collection of all entropy functions, denoted by $\Gamma_n^*$, is called the $\mathit{entropy}$ $\mathit{region}$. 

As an outer bound of $\Gamma_n^*$, the polymatroidal region $\Gamma_n$ is defined by a set of linear constraints known as the polymatroid axioms: for all $A,B\subseteq N_n$,
\begin{align}
\mathbf{h}(A)&\geq 0,\label{eq:information1}\\
\mathbf{h}(A)&\leq \mathbf{h}(B)\ \text{if}\ A\subseteq B,\label{eq:information2}\\ 
\mathbf{h}(A)+\mathbf{h}(B)&\geq \mathbf{h}(A\cap B)+\mathbf{h}(A\cup B).\label{eq:information3}
\end{align}
 Any $\mathbf{h}\in \Gamma_n$ is called (the rank function of) a $\mathit{polymatroid}$. These axioms are equivalent to the nonnegativity of Shannon information measures. Thus, $\Gamma_n$ can also be viewed as the polyhedral cone determined by Shannon-type information inequalities. In 1998, Zhang and Yeung discovered the first non-Shonnon type ineqality for four random variables \cite{zhang1998characterization}, thus $\overline{\Gamma_n^*}\subsetneq\Gamma_n$ when $n \geq 4$, where $\overline{\Gamma_n^*}$ is the closure of $\Gamma_n^*$. In 2006, Mat\'{u}\v{s} proved that $\overline{\Gamma_n^*}$ is not polyhedral for $n\geq 4$, that is, there exists infinitely many independent information inequalities for a fixed $n\geq 4$\cite{matus2007two}. The characterization of $\Gamma_n^*$ and its closure $\overline{\Gamma_n^*}$ remains a challenging open problem.

In this paper, we characterize $\Gamma_n^*$ by determining its intersection with a face of $\Gamma_n$. (See the definition of the faces of a polyhedral cone in Section \ref{2A}.) For a face $F$ of $\Gamma_n$, we say characterizing $F$ for short for determining $F^*\triangleq F\cap{\Gamma_n^*}$. In 2006, Mat\'{u}\v{s} characteried the first non-trivial 2-dimensional face of $\Gamma_3$ and proved a piecewise linear conditional inequality for entropy functions on it\cite{matus2005piecewise}. In 2012, Chen and Yeung characterized another type of 2-dimensional face of $\Gamma_3$\cite{chen2012characterizing}. Recently, Liu and Chen  systematically enumerated all 59 types of 2-dimensional faces of $\Gamma_4$ and characterized almost all of them\cite{liu2023entropy}\cite{liu2025entropy}. 


Building upon these work, this paper continues to investigate entropy functions on  2-dimensional faces of $\Gamma_n$ for general $n$. We characterize 2-dimensional faces with one extreme ray containing rank-1 matroid, and classify all such 2-dimensional faces with another extreme ray containing a matroid into four types, i.e., the all-entropic, Mat\'{u}\v{s}-type, Chen-Yeung-type and the non-entropic.

We formulate the problem in Section \ref{2}, where Section \ref{2A} and \ref{2B} give the preliminaries on matroids and polyhedral cones, respectively, and the definitions of the four types of 2-dimensional faces are given in Section \ref{2C}. The classfictions of these 2-dimensional faces and the characterization of entropy fynctions on them are in Section \ref{3}.
\section{Problem formulation}
\label{2}
\subsection{Preliminaries on matroids}
\label{2A}
For a polymatroid $P=(N_n,\mathbf{h})$, $N_n$ is called its $\mathit{ground}$ $\mathit{set}$ and $\mathbf{h}\in\Gamma_n$ is called its $\mathit{rank}$ $\mathit{functions}$. A polymatroid is called $\mathit{integer}$ if its rank function takes only integer values, i.e., $\mathbf{h}(A) \in \mathbb{Z}$ for all $A \subseteq N_n$. When an integer polymatroid further satisfies $\mathbf{h}(A) \leq |A|$ for all $A \subseteq N_n$, it reduces to a $\mathit{matroid}$ $M=(N_n,\mathbf{r})$. A matroid is called $\mathit{uniform}$ and denoted by $U_{k,n}$ if its rank function is defined by \(\mathbf{r}(A) = \min\{k, |A|\}\) for all \(A \subseteq N_n\). 

For a matroid $M=(N_n,\mathbf{r})$, an element $e\in N_n$ is a $\mathit{loop}$ if $\mathbf{r}(\{e\}) = 0$. Two distinct nonloops $e$ and $e'\in N_n$ are $\mathit{parallel}$ if $\mathbf{r}(\{e\}) = \mathbf{r}(\{e'\}) = \mathbf{r}(\{e, e'\}) = 1$. For $X\subseteq N$, it is a $\mathit{circuit}$ of $M$ if for any $x \in X$, $\mathbf{r}(X-x)=|X|-1=\mathbf{r}(X)$. Then $M$ is $\mathit{connected}$ if for every pair of distinct elements $e,f\in N_n$, there exists a circuit containing both $e$ and $f$. Note that when $M=U_{n-1,n}$, all elements in the ground set $N_n$ form the unique circuit of the matroid.  

For a positive integer $k\leq n'\leq n$ and $\alpha\subseteq N_n$ with $|\alpha|=n'$, we define the matroid $U_{k,n'}^{\alpha,n}$ whose restriction on $\alpha$ is a uniform matroid $U_{k,n'}$, and each $e\in N_n\backslash \alpha$ is a loop. When $k=1$, the matroid $U_{1,n'}^{\alpha,n}$ specifies a partition of $N_n = \alpha \cup L$, where
\begin{itemize}
    \item $\alpha$ is a subset of size $n'$ whose elements are pairwise parallel, and
    \item $L=N_n\backslash \alpha$ is the set of all loops of $U_{1,n'}^n$.
\end{itemize}
If $M$ is a rank 1 matroid, then $\alpha$ is a set of elements with rank $1$. It can be seen $\mathbf{r}(X)=1$ if $X$ intersects $\alpha$, and $\mathbf{r}(X)=0$ otherwise. Write $U^n_{1,n'}$ for $\alpha=N_{n'}$.

For more about matroid theory, readers can refer to \cite{oxley2011matroid}.
\subsection{Preliminaries on polyhedral cones}\label{2B}
By definition, $\Gamma_n$ is a $\mathit{polyhedral}$ $\mathit{cone}$, that is, a $\mathit{cone}$ determined by a finite number of linear inequalities. For a polyhedral cone  $C \subseteq \mathbb{R}^d $, if there exists a hyperplane $ P $ such that $C$ is contained in one of the closed halfspaces determined by $ P $ and $ C \cap P \neq \varnothing $, then $ C \cap P $ is called a $\mathit{face}$ of $ C $. In particular, a face of dimension 1 is called an $\mathit{extreme}$ $\mathit{ray}$, and a face of dimension $ d-1 $ is called a $\mathit{facet}$. Each face can be equivalently described either as the set of nonnegative combinations of its extreme rays (V-representation) or as the intersection of all facets containing it (H-representation). For more about polyhedral cones, please consult \cite{ziegler2012lectures}.

For an extreme ray of $\Gamma_n$, it contains an integer polymatroid because all facets of $\Gamma_n$ have integer coefficients. For an integer polymatroid $P$ contained by an extreme ray, it is called $\mathit{minimal}$ if $\frac{1}{t}\mathbf{r}_P$ is integer only when $t=1$ for any positive integer $t$. In this paper, extreme rays are identified with the minimal integer polymatroid contained in the ray.

Note that all 2-dimensional faces of $\Gamma_n$ are spanned by 2 extreme rays. Thus, a 2‑dimensional face is denoted by $(P_1, P_2)$, where $P_1$ and $P_2$ are two distinct extreme rays on the face.

Note that polymatroid axioms are equivalent to the elemental inequalities\eqref{eq:information1}-\eqref{eq:information3},
\begin{align}
\mathbf{h}({N_n})&\geq\mathbf{h}({N_n\backslash i}),\qquad i\in N_n;\label{eq:u1.1}\\
\mathbf{h}(K)+\mathbf{h}(K\cup ij)&\leq\mathbf{h}(K\cup i)+\mathbf{h}(K\cup j),\label{eq:u1.2}\nonumber\\
i,j\in N_n,K&\subseteq N_n\backslash\{i,j\}
\end{align}
each of which determines a facet of $\Gamma_n$\cite[Chapter 14]{yeung2008information}. 
Let
\begin{align}
F(i)=\{\mathbf{h}&\in\Gamma_n:\mathbf{h}(N_n)=\mathbf{h}(N\backslash i)\},\label{eq:f1}\\
F(i;j|K)=\{\mathbf{h}&\in\Gamma_n:\mathbf{h}(K)+\mathbf{h}(K\cup ij)\label{eq:f2}\nonumber\\
&=\mathbf{h}(K\cup i)+\mathbf{h}(K\cup j)\},
\end{align}
be the two types of facets determined by \eqref{eq:u1.1} and \eqref{eq:u1.2}, respectively.

For $\mathbf{h}\in\Gamma_n$, it is called $\mathit{modular}$ if $\mathbf{h}(A)=\sum\limits_{i\in A}\mathbf{h}(i)$. Note that the family of all modular polymatroids forms a face $F^\textup{mod}$ of $\Gamma_n$, and we call $F^\textup{mod}$ the $\mathit{modular}$ $\mathit{face}$ of $\Gamma_n$. It can be checked that
\begin{align}
	F^\textup{mod}=\bigcap\limits_{i;j|K}F(i;j|K)=\textup{cone}(\mathbf{r}_{U_{1,1}^{k,n}},k\in N_n),\nonumber
\end{align}  
 which are H-representation and V-representation of $F^\textup{mod}$, respectively. 

For $\mathbf{h}\in\Gamma_n$, it is called $\mathit{tight}$ if $\mathbf{h}(N_n)=\mathbf{h}(N_n\backslash i)$. Note that the family of all tight polymatroids forms a face $F^\textup{ti}$ of $\Gamma_n$, and we call $F^\textup{ti}$ the $\mathit{tight}$ $\mathit{face}$ of $\Gamma_n$. It can be checked that
\begin{align}
	F^\textup{ti}=\bigcap\limits_{i\in N_n}F(i)=\textup{cone}(\mathbf{r}_P,P\neq {U_{1,1}^{k,n}},k\in N_n), \nonumber
\end{align}
which are H-representation and V-representation of $F^\textup{ti}$, respectively. 

\subsection{Four types of ($P,U_{1,n'}^n$)}\label{2C}

For an integer polymatroid $P$, we define the probabilistically (p-)characteristic set of $P$ as
\[
\chi_	P\triangleq \{v\in \mathbb{Z}:v\geq2, \log v\cdot \mathbf{r}_P\in\Gamma_n^*\}.
\]
According to \cite{chen2021matroidal}\cite{chen2024matroidal}, the p-characteristic set of $P$ characterizes extreme rays containing $P$ when $P$ is a connected matroid with rank exceeding 1.

We embed each face $F=(P,U_{1,{n'}}^n)$ in the first quadrant of the 2-dimensional cartesian coordinate system whose axes are labeled by $a$ and $b$. The vector $(a,b)$ represents the polymatroid $a\mathbf{r}_P + b\mathbf{r}_{U_{1,n'}^n}$, where $\mathbf{r}_P$ is the rank function of the minimal integer polymatroid $P$ in an extreme ray of $F$ and $\mathbf{r}_{U_{1,n'}^n}$ is the rank function of $U_{1,{n'}}^n$, respectively. We define four types of $F$ according to $F^*\triangleq F\cap\Gamma_n^*$, and will prove in Section \ref{3}. That, all $F_S$ can be classified into these four types when $P$ is a matroid.
\begin{definition}\label{d1}
A 2-dimensional face ($P,U_{1,{n'}}^n$) of $\Gamma_n$ is called
\begin{enumerate}   
    \item \textbf{all-entropic }if ($a,b$) is entropic for all $a,b>0$;
    \item \textbf{Mat\'{u}\v{s}-type }if ($a,b$) is entropic when $a+b\geq\log v$ and $\log (v-1)<a<\log v$ for positive integer $v\in \chi_P$ and non-entropic when $a+b<\log \lceil e^a \rceil$; 
    \item \textbf{Chen-Yeung-type }if ($a,b$) is entropic when $a=\log v,b>0$ for positive integer $v\in \chi_P$ and non-entropic when $a\neq\log v,b>0$ for positive integer $v$;
    \item \textbf{non-entropic }if ($a,b$) is non-entropic for all $a>0$, $b>0$.
\end{enumerate}
\end{definition}
\indent By Lemma \ref{h1h2} in Section \ref{3}, $(P,U_{1,n'}^n)$ is non-entropic only when $P$ is non-entropic, that is, each $\mathbf{h}\in P$ is non-entropic except for the origin.

By Lemma \ref{all} in Section \ref{3c}, We can determine easily whether a 2-dimensional face is all-entropic. Thus, our discussion will be focused mainly on Chen–Yeung-type and Mat\'{u}\v{s}-type.
\begin{example}
The first characterized Mat\'{u}\v{s}-type and Chen-Yeung-type faces are $(U_{2,3},U_{1,2}^3)$ in Fig.\ref{fig:1a} and $(U_{2,3},U_{1,1}^3)$ in Fig.\ref{fig:1b}, respectively\textup{\cite{matus2005piecewise}\cite{chen2012characterizing}}. The faces $(U_{2,4},U_{1,3}^4)$ in Figure \ref{fig:1c} and $(U_{2,4},U_{1,2}^4)$ in Figure\ref{fig:1d} are, respectively, Mat\'{u}\v{s}-type and Chen-Yeung-type as well \textup{\cite{liu2025entropy}}. Note that as $\chi_{U_{2,4}}=\{v\in\mathbb{Z},v\geq 3, v\neq 6\}$, there exist some missing pieces for $v=2$ and $6$ in the two faces.
\begin{figure}[htbp]
    \centering
    \begin{subfigure}[b]{0.23\textwidth}
        \centering
        \includegraphics[width=\linewidth]{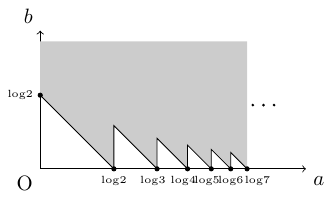}
        \caption{the face $(U_{2,3},U_{1,2}^3)$}
        \label{fig:1a}
    \end{subfigure}
    \hfill
    \begin{subfigure}[b]{0.23\textwidth}
        \centering
        \includegraphics[width=\linewidth]{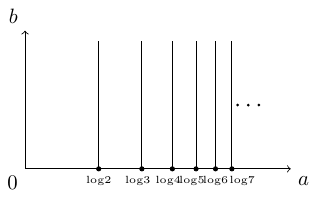}
        \caption{the face $(U_{2,3},U_{1,1}^3)$}
        \label{fig:1b}
    \end{subfigure}
    
    \vspace{0.1cm} 
    
    \begin{subfigure}[b]{0.23\textwidth}
        \centering
        \includegraphics[width=\linewidth]{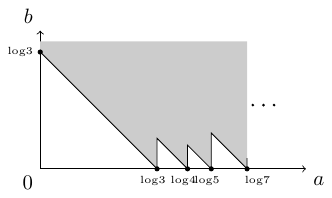}
        \caption{the face $(U_{2,4},U_{1,3}^4)$}
        \label{fig:1c}
    \end{subfigure}
    \hfill
    \begin{subfigure}[b]{0.23\textwidth}
        \centering
        \includegraphics[width=\linewidth]{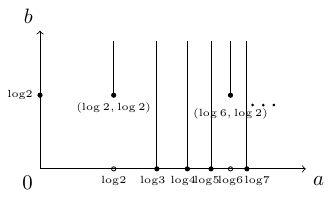}
        \caption{the face $(U_{2,4},U_{1,2}^4)$}
        \label{fig:1d}
    \end{subfigure}
    
    \caption{Chen-Yeung-type and Mat\'{u}\v{s}-type faces}
    \label{fig:bigfigure}
\end{figure}
\end{example}

\section{Entropy functions on 2-dimensional faces $(M,U_{1,n'}^n)$}\label{3}
\subsection{Entropy functions on $(M,U_{1,1}^n)$} 

\begin{lemma}\label{modti}
 If $\mathbf{h}$ is in an extreme of $\Gamma_n$, then, $\mathbf{h}\in F^\textup{mod}$ or $\mathbf{h}\in F^\textup{ti}$.
\end{lemma}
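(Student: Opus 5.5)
The plan is to use the standard decomposition of a polymatroid into a tight part and a modular part, and then invoke extremality. Let $\mathbf{h}$ lie on an extreme ray of $\Gamma_n$. For each $i\in N_n$ set $c_i\triangleq\mathbf{h}(N_n)-\mathbf{h}(N_n\backslash i)\geq 0$; this is nonnegative by \eqref{eq:u1.1}. Define
\begin{align}
\mathbf{m}(A)=\sum_{i\in A}c_i,\qquad \mathbf{h}^{\textup{ti}}(A)=\mathbf{h}(A)-\mathbf{m}(A),\nonumber
\end{align}
so that $\mathbf{h}=\mathbf{h}^{\textup{ti}}+\mathbf{m}$. The vector $\mathbf{m}=\sum_i c_i\mathbf{r}_{U_{1,1}^{i,n}}$ is modular and therefore lies in $F^\textup{mod}$.

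The main step is to verify that $\mathbf{h}^{\textup{ti}}\in\Gamma_n$; once this is done, it lies in $F^\textup{ti}$.

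\emph{Submodularity.} Subtracting a modular function leaves every expression $\mathbf{h}(K\cup i)+\mathbf{h}(K\cup j)-\mathbf{h}(K)-\mathbf{h}(K\cup ij)$ unchanged. So \eqref{eq:u1.2} holds for $\mathbf{h}^{\textup{ti}}$.

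\emph{Monotonicity.} It suffices to show $\mathbf{h}^{\textup{ti}}(A\cup i)\geq\mathbf{h}^{\textup{ti}}(A)$ for $i\notin A$, which is the inequality $\mathbf{h}(A\cup i)-\mathbf{h}(A)\geq c_i=\mathbf{h}(N_n)-\mathbf{h}(N_n\backslash i)$. Since $A\subseteq N_n\backslash i$, this follows from submodularity of $\mathbf{h}$ (diminishing marginal returns).

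\emph{Nonnegativity.} This follows from monotonicity together with $\mathbf{h}^{\textup{ti}}(\emptyset)=0$.

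\emph{Tightness.} By construction, $\mathbf{h}^{\textup{ti}}(N_n)-\mathbf{h}^{\textup{ti}}(N_n\backslash i)=c_i-c_i=0$ for every $i$, so $\mathbf{h}^{\textup{ti}}\in F^\textup{ti}$.

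With the decomposition in hand, we conclude using extremality. On an extreme ray, any decomposition $\mathbf{h}=\mathbf{h}_1+\mathbf{h}_2$ with $\mathbf{h}_1,\mathbf{h}_2\in\Gamma_n$ forces both summands to be nonnegative multiples of $\mathbf{h}$. If $\mathbf{m}=0$, then $\mathbf{h}=\mathbf{h}^{\textup{ti}}\in F^\textup{ti}$. Otherwise $\mathbf{m}=\lambda\mathbf{h}$ with $\lambda>0$, and since $F^\textup{mod}$ is a cone, $\mathbf{h}\in F^\textup{mod}$.

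We expect the only real obstacle to be the monotonicity check for $\mathbf{h}^{\textup{ti}}$. It reduces to the marginal-decrease consequence of \eqref{eq:information3}, obtained by telescoping the elemental inequalities \eqref{eq:u1.2} along a chain from $A$ to $N_n\backslash i$.
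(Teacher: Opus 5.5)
Your proof is correct, but it takes a different and more self-contained route than the paper's. The paper does not really argue the lemma. It says the lemma can be read off from the V-representations $F^{\textup{mod}}=\textup{cone}(\mathbf{r}_{U_{1,1}^{k,n}},k\in N_n)$ and $F^{\textup{ti}}=\textup{cone}(\mathbf{r}_P,P\neq U_{1,1}^{k,n})$, which it only asserts (``it can be checked''), and it points to Section III of \cite{matuvs2016entropy}. The first representation is easy. The second one says that every extreme ray other than the $n$ rays $U_{1,1}^{k,n}$ is tight, which is essentially the lemma itself, so the actual justification is delegated to the citation. That route buys brevity by leaning on known facts about the extreme rays of $\Gamma_n$. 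You instead supply the missing content. Your splitting $\mathbf{h}=\mathbf{h}^{\textup{ti}}+\mathbf{m}$ with $c_i=\mathbf{h}(N_n)-\mathbf{h}(N_n\backslash i)$ shows $\Gamma_n=F^{\textup{ti}}+F^{\textup{mod}}$. The only nontrivial check, monotonicity of $\mathbf{h}^{\textup{ti}}$, is a single instance of \eqref{eq:information3} applied to $A\cup i$ and $N_n\backslash i$. The fact that a point on an extreme ray cannot be split nontrivially then finishes the argument. As a side benefit, your argument also proves both V-representations that the paper takes for granted. An extreme ray lying in the simplicial cone $F^{\textup{mod}}$ must be one of its generators $U_{1,1}^{k,n}$, and every other extreme ray is tight. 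Your use of $\mathbf{h}(\emptyset)=0$ is harmless: even without that convention, $\mathbf{h}^{\textup{ti}}(\emptyset)=\mathbf{h}(\emptyset)\geq 0$ together with monotonicity gives nonnegativity.
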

Lemma \ref{modti} can be directedly obtained from the V-representions of both $F^\textup{mod}$ and $F^\textup{ti}$, and also \cite[Section III]{matuvs2016entropy}.
\begin{proposition}\label{u11 face}
For any extreme ray ${P}$ of ${\Gamma_n}$, $({P},U_{1,1}^n)$ is a 2-dimensional face of $\Gamma_n$.
\end{proposition}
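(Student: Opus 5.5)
The plan is to exhibit $(P,U_{1,1}^n)$ as an explicit intersection of facets of $\Gamma_n$ and then count dimensions. By symmetry of $\Gamma_n$ under permutations of $N_n$, it suffices to treat $U_{1,1}^n=U_{1,1}^{\{1\},n}$, whose rank function is $\mathbf{r}(A)=1$ if $1\in A$ and $0$ otherwise; write $\mathbf{u}$ for it. We assume $P\neq U_{1,1}^n$, since otherwise there is no 2-dimensional face to speak of. First record which facets contain $\mathbf{u}$. Since $\mathbf{u}$ is modular, it satisfies every $F(i;j|K)$ with equality. It also satisfies $\mathbf{u}(N_n)=\mathbf{u}(N_n\backslash i)$ for every $i\neq 1$. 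The only facet missing $\mathbf{u}$ is therefore $F(1)$.

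By Lemma \ref{modti}, $P$ lies in $F^\textup{mod}$ or in $F^\textup{ti}$, and I split into these two cases.

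\emph{Case 1: $P$ modular.} Then $P=U_{1,1}^{k,n}$ for some $k\neq 1$. The face $F^\textup{mod}$ is the cone over the $n$ linearly independent vectors $\mathbf{r}_{U_{1,1}^{j,n}}$, so it is simplicial, and every subset of its extreme rays spans a face of it. Concretely, I will check that
\[
F^\textup{mod}\cap\bigcap_{i\neq 1,k}F(i)=\textup{cone}\big(\mathbf{u},\mathbf{r}_{U_{1,1}^{k,n}}\big).
\]
This holds because for a modular $\mathbf{h}$ the condition $F(i)$ reads $\mathbf{h}(i)=0$. A face of a face is a face, so the right-hand side is a 2-dimensional face of $\Gamma_n$.

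\emph{Case 2: $P$ tight, i.e.\ $P\in F^\textup{ti}\subseteq F(1)$.} Let $\mathcal{S}$ be the set of facets of $\Gamma_n$ containing $\mathbf{r}_P$. Since $P$ spans an extreme ray, the ray equals $\bigcap\mathcal{S}$. By the standard fact on polyhedral cones, the linear span of the ray is the solution space $W$ of the equalities defining the facets in $\mathcal{S}$, so $\dim W=1$. Define
\[
G=\bigcap_{S\in\mathcal{S},\,S\neq F(1)}S .
\]
This is a face of $\Gamma_n$ (taking $G=\Gamma_n$ if the index set is empty is not an issue, since $n\geq 2$ gives many facets through $P$). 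Every facet in the defining family is different from $F(1)$, so by the first paragraph it contains $\mathbf{u}$; hence $\mathbf{u}\in G$ and $\mathbf{r}_P\in G$. Removing the single equation $\mathbf{h}(N_n)=\mathbf{h}(N_n\backslash 1)$ from the system defining $W$ enlarges its solution space by dimension at most one. Therefore $\dim G\leq 2$. On the other hand, $\mathbf{r}_P$ and $\mathbf{u}$ are linearly independent, because $\mathbf{r}_P\in F(1)$ and $\mathbf{u}\notin F(1)$. Hence $\dim G=2$. A pointed 2-dimensional cone has exactly two extreme rays. Both $P$ and $U_{1,1}^n$ are extreme rays of $\Gamma_n$ lying in $G$, hence extreme in $G$, so $G=(P,U_{1,1}^n)$.

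The step I expect to need the most care is the dimension count in Case 2. It relies on the fact that the affine hull of a face of a polyhedral cone is cut out exactly by the equalities of the facets containing it. I will cite this from \cite{ziegler2012lectures} rather than reprove it.
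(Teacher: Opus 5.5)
Your proof is correct. Your Case 1 is the paper's modular case, except that you actually carry out the ``it can be checked'' step, using $\mathbf{h}(N_n)-\mathbf{h}(N_n\backslash i)=\mathbf{h}(i)$ on $F^\textup{mod}$.

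In the tight case you take a genuinely different route. The paper argues by contradiction. If the minimal face containing $P$ and $U_{1,1}^n$ were not 2-dimensional, it would contain a third extreme ray $P'$, and Lemma \ref{modti} is applied again, this time to $P'$:
a modular $P'=U_{1,1}^{k,n}$ is excluded because $F(k)$ contains $P$ and $U_{1,1}^n$ but not $P'$;
a tight $P'$ would lie in every facet containing $P$, which forces $P'=P$.

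You instead count dimensions directly:
the ray of $P$ is the solution space of the equalities of the facets in $\mathcal{S}$;
dropping the single equation of $F(1)$ raises the dimension by at most one;
the linear independence of $\mathbf{r}_P$ and $\mathbf{u}$ gives the matching lower bound.

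The two routes trade off as follows. The paper needs only the weaker fact that a face is the intersection of the facets containing it, but it has to classify an unknown third ray. Your route needs the stronger standard fact that the linear hull of a face is cut out by the equalities of the facets containing it. That fact applies here because the elemental inequalities form a facet description of the full-dimensional cone $\Gamma_n$. In exchange, you identify the face explicitly as $\bigcap(\mathcal{S}\backslash\{F(1)\})$.

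Finally, your Case 2 argument uses nothing about tightness beyond $\mathbf{r}_P\in F(1)$. That condition also holds for $U_{1,1}^{k,n}$ with $k\neq 1$, so Case 1 is subsumed and the case split is unnecessary.
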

\begin{proof} 
By Lemma \ref{modti}, $P$ is either modular or tight.

If $P$ is modular, then $P$ is $U_{1,1}^{k,n}$ with $k\neq 1$. Then it can be checked that $(U_{1,1}^{k,n},U_{1,1}^{n})$ is a 2-dimensional face of $\Gamma_n$ contained in all facets except for $F(1)$ and $F(k)$.

If $P$ is tight, assume the contrary that $P$ and $U_{1,1}^n$ do not span a 2-dimensional face. Then it contains an extreme ray $P'\neq P,U_{1,1}^n$.
\begin{enumerate}
	\item If $P'$ is modular, then $P'=U_{1,1}^{k,n}$ with $k\neq 1$. Note that $F(k)$ do not contain $U_{1,1}^{k,n}$ but contain both $P$ and $U_{1,1}^n$, a contradiction.
	\item If $P'$ is tight, then $P'$ is contained by all $F(i)$. Thus $P'$ is contained by all $F(i)$ and $F(i;j|K)$ that contain $P$, which implies $P'\neq P$, a contradiction as well.
\end{enumerate}

	Hence, $(P,U_{1,1}^n)$ is a 2-dimensional face of $\Gamma_n$.
\end{proof}

\begin{lemma}\label{h1h2}\textup{\cite[Lemma 15.3]{yeung2008information}}
For any $\mathbf{h}_1$, $\mathbf{h}_2\in \Gamma_n^*$, $\mathbf{h}_1+\mathbf{h}_2\in \Gamma_n^*$.
\end{lemma}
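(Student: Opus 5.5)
The plan is to prove the statement directly by constructing a random vector whose entropy function is $\mathbf{h}_1+\mathbf{h}_2$, taking two independent realizations of the given vectors and placing them side by side coordinatewise. Since $\mathbf{h}_1,\mathbf{h}_2\in\Gamma_n^*$, there exist random vectors $(X_i,i\in N_n)$ and $(Y_i,i\in N_n)$, each $X_i$ on a finite alphabet $\mathcal{X}_i$ and each $Y_i$ on $\mathcal{Y}_i$, with $H(X_A)=\mathbf{h}_1(A)$ and $H(Y_A)=\mathbf{h}_2(A)$ for all $A\subseteq N_n$. First I will replace the pair by one on a common probability space, namely the product of the two joint distributions. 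This makes the whole vector $(X_i,i\in N_n)$ independent of the whole vector $(Y_i,i\in N_n)$, while each marginal joint law is unchanged, so $H(X_A)$ and $H(Y_A)$ keep their values.

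Next I define $Z_i=(X_i,Y_i)$, distributed on $\mathcal{X}_i\times\mathcal{Y}_i$, for each $i\in N_n$. For any $A\subseteq N_n$, the variable $Z_A$ is a relabeling (a bijective function) of the pair $(X_A,Y_A)$, so $H(Z_A)=H(X_A,Y_A)$. By independence, $H(X_A,Y_A)=H(X_A)+H(Y_A)=\mathbf{h}_1(A)+\mathbf{h}_2(A)$. For $A=\varnothing$ both sides are $0$.

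Therefore the entropy function of $(Z_i,i\in N_n)$ equals $\mathbf{h}_1+\mathbf{h}_2$, and so $\mathbf{h}_1+\mathbf{h}_2\in\Gamma_n^*$.

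There is no serious obstacle; the only point needing care is that independence must hold between the two entire vectors, not merely coordinate by coordinate. The product-measure construction guarantees this. With that in place, additivity of entropy under independence gives the identity simultaneously for every subset $A$.
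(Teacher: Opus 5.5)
Your proposal is correct and is essentially the standard argument: the paper gives no proof of its own but cites \cite[Lemma 15.3]{yeung2008information}, whose proof is exactly this construction. There, two independent random vectors realizing $\mathbf{h}_1$ and $\mathbf{h}_2$ are paired coordinatewise as $Z_i=(X_i,Y_i)$, so that $H(Z_A)=H(X_A)+H(Y_A)$ for every $A\subseteq N_n$.
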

\begin{lemma}\label{uniform}\textup{\cite[Lemma 3]{liu2023entropy}}
If $X_1$ and $X_2$ are independent and for any $p(x_1,x_2,x_3)>0$, $p(x_1)=p(x_2)$, then $X_1$ and $X_2$ are uniformly distributed on $\mathcal{X}_1$ and $\mathcal{X}_2$, respectively, $|\mathcal{X}_1|=|\mathcal{X}_2|$ and $H(X_1)=H(X_2)$.
\end{lemma}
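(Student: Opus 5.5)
The plan is to work directly from the definitions. Throughout, I take $\mathcal{X}_1$ and $\mathcal{X}_2$ to be the supports of $X_1$ and $X_2$, i.e., every symbol has positive marginal probability; otherwise "uniform on $\mathcal{X}_i$" could not hold. The key observation is that independence makes every pair of symbols jointly possible. The hypothesis then forces all marginal probabilities to coincide with one common constant.

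\emph{Step 1: every pair of symbols occurs.} Fix arbitrary $x_1\in\mathcal{X}_1$ and $x_2\in\mathcal{X}_2$. Independence gives $p(x_1,x_2)=p(x_1)p(x_2)>0$. Since $p(x_1,x_2)=\sum_{x_3}p(x_1,x_2,x_3)$, some $x_3$ satisfies $p(x_1,x_2,x_3)>0$. The hypothesis then yields $p(x_1)=p(x_2)$. Hence $p(x_1)=p(x_2)$ holds for \emph{every} pair $(x_1,x_2)\in\mathcal{X}_1\times\mathcal{X}_2$, not only for pairs already known to appear jointly.

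\emph{Step 2: a common constant.} Fix one $x_2^0\in\mathcal{X}_2$ and set $c=p(x_2^0)>0$. By Step 1, $p(x_1)=c$ for all $x_1\in\mathcal{X}_1$, so $X_1$ is uniform. Symmetrically, fixing any $x_1$ shows $p(x_2)=p(x_1)=c$ for all $x_2\in\mathcal{X}_2$, so $X_2$ is uniform with the same mass $c$.

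\emph{Step 3: sizes and entropies.} Normalization gives $|\mathcal{X}_1|\,c=1=|\mathcal{X}_2|\,c$, hence $|\mathcal{X}_1|=|\mathcal{X}_2|=1/c$. Consequently $H(X_1)=H(X_2)=\log(1/c)$.

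The only delicate point is Step 1. The hypothesis is stated only for triples of positive probability, and independence is exactly what transfers it to all pairs of support points. Everything after that is immediate.
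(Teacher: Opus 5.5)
Your proof is correct. The key step is using independence to show that every pair $(x_1,x_2)$ of support points has positive joint probability, and your marginalization over $x_3$ correctly handles the hypothesis being stated for triples; this forces all marginal masses to equal one common constant. The paper gives no proof of its own and only cites the lemma from Liu and Chen, so there is no argument to compare against; your direct argument suffices.
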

\begin{lemma}\textup{\cite[Theorem 2.1.5]{nguyen1978semimodular}}\label{connected}
A matroid is in an extreme ray of $\Gamma_n$ if and only if it is connected by deleting its loops.
\end{lemma}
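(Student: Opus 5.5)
We argue directly from the polymatroid axioms, after first reducing to the loopless case. Let $L$ be the set of loops of $M=(N_n,\mathbf{r})$. Consider the set $G_L$ of all $\mathbf{h}\in\Gamma_n$ with $\mathbf{h}(e)=0$ for every $e\in L$. By monotonicity and submodularity, $\mathbf{h}(e)=0$ forces $\mathbf{h}(A\cup e)=\mathbf{h}(A)$ for all $A$. Also, $\mathbf{h}(e)\geq 0$ is a valid inequality for $\Gamma_n$, so $G_L$ is a face of $\Gamma_n$. The map $\mathbf{h}\mapsto \mathbf{h}|_{2^{N_n\setminus L}}$ is then a linear isomorphism of $G_L$ onto $\Gamma_{N_n\setminus L}$. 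An extreme ray of a face is an extreme ray of the cone, and a ray of $\Gamma_n$ lying in a face is extreme there iff it is extreme in the face. Hence $\mathbf{r}$ spans an extreme ray of $\Gamma_n$ iff the deletion $M\setminus L$ spans an extreme ray of its own polymatroidal region. So we may assume $M$ is loopless.

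\emph{Necessity.} Suppose the loopless $M$ is not connected. Then $N_n=E_1\cup E_2$ with $E_1,E_2$ nonempty, disjoint, and $\mathbf{r}(A)=\mathbf{r}(A\cap E_1)+\mathbf{r}(A\cap E_2)$, since $M$ is the direct sum of its components. Set $\mathbf{r}_i(A)=\mathbf{r}(A\cap E_i)$. Each $\mathbf{r}_i$ is a polymatroid, and $\mathbf{r}=\mathbf{r}_1+\mathbf{r}_2$. Moreover $\mathbf{r}_1$ vanishes on $E_2$ but not on $E_1$, because there are no loops, and symmetrically for $\mathbf{r}_2$. So $\mathbf{r}_1,\mathbf{r}_2$ are not proportional to $\mathbf{r}$, and $\mathbf{r}$ is not extreme.

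\emph{Sufficiency.} Let $M$ be loopless and connected, and suppose $\mathbf{r}=\mathbf{h}_1+\mathbf{h}_2$ with $\mathbf{h}_1,\mathbf{h}_2\in\Gamma_n$. It suffices to show $\mathbf{h}_1=c\,\mathbf{r}$ for some $c\geq 0$. The key principle is that any Shannon-type inequality which is a nonnegative combination of elemental inequalities and holds with equality for $\mathbf{r}$ must hold with equality for each $\mathbf{h}_i$, since the slacks of $\mathbf{h}_1$ and $\mathbf{h}_2$ are nonnegative and add to zero. We apply this principle three times.
\begin{enumerate}
\item For an independent set $I$, $\mathbf{r}(I)=\sum_{e\in I}\mathbf{r}(e)$. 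The inequality $\mathbf{h}(I)\leq\sum_{e\in I}\mathbf{h}(e)$ is a telescoping sum of elemental inequalities of type \eqref{eq:u1.2}. Hence $\mathbf{h}_i(I)=\sum_{e\in I}\mathbf{h}_i(e)$.
\item For a circuit $C$ and $e\in C$, $\mathbf{r}(C)=\mathbf{r}(C\setminus e)$. The inequality $\mathbf{h}(C)\geq\mathbf{h}(C\setminus e)$ is again a sum of elemental inequalities. Hence $\mathbf{h}_i(C)=\mathbf{h}_i(C\setminus e)$.
\item More generally, for any $A$ with basis $B\subseteq A$, $\mathbf{r}(A)=\mathbf{r}(B)$, so $\mathbf{h}_i(A)=\mathbf{h}_i(B)$.
\end{enumerate}

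Now take any circuit $C$ and $e,f\in C$. Both $C\setminus e$ and $C\setminus f$ are independent, so by the first and second facts
\[
\mathbf{h}_i(C)-\textstyle\sum_{g\in C}\mathbf{h}_i(g) = -\mathbf{h}_i(e) = -\mathbf{h}_i(f).
\]
Thus $\mathbf{h}_i(e)=\mathbf{h}_i(f)$. Since $M$ is connected, every pair of elements lies in a common circuit, so $\mathbf{h}_i(e)=c_i$ for all $e\in N_n$. Finally, for any $A$ with basis $B$, the third and first facts give $\mathbf{h}_i(A)=\mathbf{h}_i(B)=c_i|B|=c_i\mathbf{r}(A)$. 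Hence $\mathbf{h}_i=c_i\mathbf{r}$, and $\mathbf{r}$ spans an extreme ray.

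The step needing the most care is the propagation principle: each equality used must be written explicitly as a sum of elemental inequalities that are all tight at $\mathbf{r}$. The three facts above show this is routine via telescoping chains along a fixed ordering of the set.
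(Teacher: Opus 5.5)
Your proof is correct, and its route necessarily differs from the paper's, because the paper does not prove this lemma. It imports it from Nguyen's 1978 paper (Theorem 2.1.5 there) and uses it as a black box. Your proof is self-contained, using only the polymatroid axioms and standard matroid facts. The loopless reduction is sound: the face $G_L=\{\mathbf{h}\in\Gamma_n:\mathbf{h}(e)=0 \text{ for } e\in L\}$ is linearly identified with $\Gamma_{N_n\setminus L}$. Necessity is the usual direct-sum splitting. Sufficiency is complete: singleton values agree along every circuit, connectivity makes them all equal to $c_i$, and then $\mathbf{h}_i(A)=\mathbf{h}_i(B)=c_i|B|=c_i\mathbf{r}(A)$ for a basis $B$ of $A$.

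Some small remarks follow.
\begin{enumerate}
\item The care you flag at the end is unnecessary. For any linear functional $f$ that is nonnegative on $\Gamma_n$ with $f(\mathbf{r})=0$, the identity $f(\mathbf{h}_1)+f(\mathbf{h}_2)=0$ already forces $f(\mathbf{h}_i)=0$. Subadditivity and monotonicity are valid on $\Gamma_n$ outright, so no explicit elemental decomposition is needed.
\item For necessity, the paper defines connectedness via circuits. Under that definition, the fact that a disconnected matroid splits as $M|E_1\oplus M|E_2$ is a genuine theorem: transitivity of ``lying in a common circuit'' rests on circuit elimination. It should be cited, not treated as definitional.
\item The all-loops case must be excluded. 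There the deletion is the empty matroid, which is vacuously connected, while $\mathbf{r}=0$ spans no ray. Both the statement and your sufficiency step tacitly assume $M$ has a nonloop.
\end{enumerate}

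What your route buys over the bare citation is transparency. It shows exactly which tight inequalities pin down the ray. Your loop-reduction step also rigorously justifies a move the paper later makes informally: regarding $M\setminus\{1\}$ as an extreme ray of $\Gamma_{n-1}$ when $1$ is a loop.
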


\begin{lemma}\label{u11}
For $F=(U_{n-1,n},U_{1,1}^n)$ and $n\ge 3$, $\mathbf{h}=(a,b)\in F$ is entropic if and only if $a=\log v$ for some positive integer $v$, that is, $F$ is Chen-Yeung-type.
\end{lemma}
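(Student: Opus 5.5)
The plan is to prove the two directions separately, writing $\mathbf{h}=(a,b)$ explicitly as
\[
\mathbf{h}(A)=a\min\{n-1,|A|\}+b\,\mathbf{1}[1\in A],\qquad A\subseteq N_n .
\]
Sufficiency will come from an explicit construction combined with Lemma \ref{h1h2}. Necessity will come from a conditional-uniformity argument that ends with Lemma \ref{uniform}.

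\emph{Sufficiency.} Let $a=\log v$ for a positive integer $v$; the case $v=1$ gives $a=0$ and is trivial, so take $v\ge 2$.
\begin{enumerate}
\item Let $Y_2,\dots,Y_n$ be i.i.d.\ uniform on $\mathbb{Z}_v$ and put $Y_1=-\sum_{i\ge 2}Y_i \bmod v$. Any $n-1$ of the $Y_i$ are then independent and uniform, and they determine the remaining one. Hence the entropy function of $(Y_i,i\in N_n)$ is $\log v\cdot\mathbf{r}_{U_{n-1,n}}$.
\item For any $b>0$, $b\,\mathbf{r}_{U_{1,1}^n}$ is entropic: take $X_1$ with $H(X_1)=b$ and all other $X_i$ constant.
\item By Lemma \ref{h1h2}, the sum $(\log v,b)$ is entropic.
\end{enumerate}

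\emph{Necessity.} Suppose $\mathbf{h}=(a,b)$ with $b>0$ is the entropy function of $(X_i,i\in N_n)$. Since $n\ge 3$, we can fix two distinct indices $2,3\neq 1$ and set $K=N_n\setminus\{2,3\}$. Note that $1\in K$ and $|K|=n-2$. Reading off values of $\mathbf{h}$ gives the following.
\begin{enumerate}
\item We have $\mathbf{h}(K\cup 2)=\mathbf{h}(K\cup 3)=\mathbf{h}(N_n)=a(n-1)+b$. Hence $H(X_2\mid X_K,X_3)=H(X_3\mid X_K,X_2)=0$. So for each fixed $x_K$, the values of $X_2$ and $X_3$ of positive probability are in bijection.
\item We have $H(X_2\mid X_K)=\mathbf{h}(K\cup2)-\mathbf{h}(K)=a=\mathbf{h}(2)$. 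Thus $X_2$ is independent of $X_K$, and likewise $X_3$ is independent of $X_K$.
\item We have $\mathbf{h}(23)=2a$, since $2\le n-1$. So $X_2$ and $X_3$ are independent.
\end{enumerate}
Now take any $(x_2,x_3,x_K)$ with positive probability. By step 1, $x_2$ and $x_3$ correspond under the bijection for $x_K$, so
\[
p(x_2\mid x_K)=p(x_2,x_3\mid x_K)=p(x_3\mid x_K).
\]
By step 2 this says $p(x_2)=p(x_3)$. This is exactly the hypothesis of Lemma \ref{uniform}, applied with $X_2,X_3$ in the roles of its $X_1,X_2$ and $X_K$ in the role of its $X_3$; step 3 supplies the required independence. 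The lemma then gives that $X_2$ is uniform on its support, so $a=H(X_2)=\log|\mathcal{X}_2|$ is the logarithm of a positive integer.

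Since $\chi_{U_{n-1,n}}$ contains every $v\ge 2$, by step 1 of the sufficiency part, the two directions together say that $F$ is Chen-Yeung-type. The step I expect to need the most care is the necessity argument. The point to check is that the single choice $K=N_n\setminus\{2,3\}$, which places the rank-one part on element $1\in K$, yields the mutual functional dependence and both independence relations at the same time. This is exactly where $n\ge3$ is used.
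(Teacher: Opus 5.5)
Your proof is correct and follows the same route as the paper. Given the other variables, $X_2$ and $X_3$ determine each other, and both are independent of those variables; this gives $p(x_2)=p(x_3)$ on the support. Lemma \ref{uniform} then forces uniformity, and sufficiency follows from Lemma \ref{h1h2} applied to the two entropic rays. Conditioning on all of $X_K$ with $K=N_n\setminus\{2,3\}$, rather than on $X_1$ alone as the paper's displayed identity $p(x_1x_2)=p(x_1x_3)$ suggests, is what makes the ``routine calculation'' work for $n\ge 4$, where $X_3$ is not a function of $(X_1,X_2)$.
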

\begin{proof}
For $F=(U_{n-1,n},U_{1,1}^n)$, if $\mathbf{h} \in F$ is entropic, its characterizing random vector ($X_i \in  N_n$) satisfies the following information equalities,
\begin{align}
H(X_{N_n})&=H(X_{N_{n}-i}),i \in N_n,i\neq 1,\label{eq:u11.1} \\
H(X_{i \cup K}) + H(X_{j \cup K}) &= H(X_{K}) + H(X_{ij \cup K}).\label{eq:u11.2}\nonumber \\
K\subseteq N_n,|K|\le &n-3,i,j\in N_n     \backslash K
\end{align}
For $(x_i, i \in N_n) \in \mathcal{X}_{N_n}$, with $p(x_{N_n}) > 0$, above information equalities imply that the probability mass function satisfies
\begin{align}
p(x_{N_n})&=p(x_{N_n}-i),i\in N_n,i\neq 1,\label{eq:nn}\\
p(x_ix_K)p(x_jx_K)&=p(x_K)p(x_ix_jx_K).\label{eq:ijk} \nonumber \\
K\subseteq N_n,|K|\le &n-3,i,j\in N_n     \backslash K
\end{align}
Routine calculation leads to
\begin{align}
p(x_1x_2)&=p(x_1x_3),\label{eq:u11.4}\\
p(x_1x_2)&=p(x_1)p(x_2),\label{eq:u11.5}\\
p(x_1x_3)&=p(x_1)p(x_3).\label{eq:u11.6}
\end{align}
Then, we can get
\begin{align}
p(x_2)=p(x_3).\label{eq:u11.7}
\end{align}
Note that $X_2$ and $X_3$ are independent, by Lemma \ref{uniform}, $X_2$ and $X_3$ are uniformly distributed on $\mathcal{X}_2$ and $\mathcal{X}_3$, respectively, and so $H(X_2) = H(X_3) = \log v$ where $v = |\mathcal{X}_2| = |\mathcal{X}_3|$.\\
\indent The "only if" part is immediately implied by Lemma \ref{h1h2} and the fact that $a = \log v$ on the ray $U_{n-1,n}$, and the whole ray $U_{1,1}$ is entropic.
\end{proof}

\begin{theorem}\label{u11cy}
For any matroid ${M}$ of rank $r\geq2$ in an extreme ray of $\Gamma_n$, $({M},U_{1,1}^n)$ is Chen-Yeung-type.
\end{theorem}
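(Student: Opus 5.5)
The plan is to prove the two directions of Definition~\ref{d1}(3) separately, with the hard direction reduced to Lemma~\ref{u11} through a circuit of $M$. By Lemma~\ref{connected}, $M$ is connected after deleting its loops. Since $r\geq 2$, every circuit through two distinct nonloops of $M$ has size at least $3$. Proposition~\ref{u11 face} already guarantees that $(M,U_{1,1}^n)$ is a 2-dimensional face.

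\emph{Sufficiency.} Let $v\in\chi_M$, so $\log v\cdot\mathbf{r}_M$ is entropic. The vector $b\,\mathbf{r}_{U_{1,1}^n}$ is entropic for every $b>0$: take $X_1$ with $H(X_1)=b$ and all other $X_i$ constant. Lemma~\ref{h1h2} then shows that $(\log v,b)$ is entropic for all $b>0$.

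\emph{Necessity: the case where $1$ is a loop of $M$.} Suppose $\mathbf{h}=(a,b)$ with $b>0$ is entropic, realized by $(X_i,i\in N_n)$. If $1$ is a loop of $M$, then for $A\subseteq N_n\setminus 1$ we have $\mathbf{h}(A)=a\,\mathbf{r}_M(A)$. Replacing $X_1$ by a constant therefore realizes $a\,\mathbf{r}_M$, so $a=\log v$ with $v\in\chi_M$. Here $v$ is an integer because, $M$ being a connected matroid of rank $\geq 2$, the cited results \cite{chen2021matroidal,chen2024matroidal} force any entropic point on its ray to be $\log v$ times $\mathbf{r}_M$ with $v$ integral.

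\emph{Necessity when $1$ is a nonloop: showing $a=\log v$.} Pick another nonloop $j$ and a circuit $C\ni 1,j$ with $|C|\geq 3$. The restriction of $M$ to $C$ is $U_{|C|-1,|C|}$. Restricting $\mathbf{h}$ to the subvector $(X_i,i\in C)$ gives the point $(a,b)$ on the face $(U_{|C|-1,|C|},U_{1,1}^{|C|})$, after relabelling so that $1$ plays the distinguished role. Lemma~\ref{u11} then gives $a=\log v$ for a positive integer $v$.

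\emph{Necessity when $1$ is a nonloop: showing $v\in\chi_M$.} This is the step I expect to be the main obstacle. I will try to construct from $(X_i)$ a new vector realizing $\log v\cdot\mathbf{r}_M$. Since $M$ is connected, $\mathbf{h}(N_n)=\mathbf{h}(N_n\setminus 1)$, so $X_1$ is a function of $X_{N_n\setminus 1}$. Moreover, $X_{N_n\setminus 1}$ already realizes $a\,\mathbf{r}_M$ restricted to $N_n\setminus 1$. The idea is to replace $X_1$ by a coarsening $Y_1=f(X_1)$ that strips off the extra entropy $b$. Concretely, I would identify $x_1\sim x_1'$ when $p(x_{C\setminus 1}\mid x_1)=p(x_{C\setminus 1}\mid x_1')$ for the circuit $C$ above, and let $Y_1$ be the class of $X_1$. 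The probability identities of the type \eqref{eq:nn}--\eqref{eq:ijk}, obtained as in the proof of Lemma~\ref{u11}, should show three things: $H(Y_1)=a$, $Y_1$ is determined by $X_{C\setminus 1}$, and $Y_1$ retains all the information $X_1$ carries about every $X_A$ with $A\subseteq N_n\setminus 1$. These would give $H(Y_1X_A)=a\,\mathbf{r}_M(A\cup 1)$ for all $A\subseteq N_n\setminus 1$, so $(Y_1,X_2,\dots,X_n)$ realizes $\log v\cdot\mathbf{r}_M$.

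If this direct coarsening proves awkward, my fallback is to use the functional dependencies forced by the facets containing $M$ to show that the part of $X_1$ carrying the entropy $b$ is conditionally independent of everything relevant given $Y_1$, and then discard it. This last verification is where I expect most of the work to lie.
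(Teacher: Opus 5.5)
Your sufficiency argument and your proof that $a=\log v$ for a positive integer $v$ are correct and follow the paper's proof. The paper uses the same split according to whether $1$ is a loop of $M$. In the nonloop case it also restricts to a circuit $C\ni 1$ and applies Lemma~\ref{u11}. In the loop case the paper restricts to a circuit and cites \cite[Proposition 1]{chen2021matroidal}; deleting $X_1$ as you do works equally well and even yields $v\in\chi_M$.

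One slip needs fixing. It is false that every circuit through two distinct nonloops has size at least $3$: parallel nonloops form a $2$-circuit, and then no larger circuit contains both. You must choose $j$ not parallel to $1$. Such a $j$ exists because $r\geq 2$, and any circuit through $1$ and $j$ then has size at least $3$ because $\{1,j\}$ is independent.

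The step you flag as the main obstacle, showing $v\in\chi_M$ when $1$ is a nonloop, is not part of the statement. Definition~\ref{d1}(3) only requires two things:
\begin{itemize}
\item $(a,b)$ is entropic when $a=\log v$ with $v\in\chi_P$;
\item $(a,b)$ is non-entropic when $a$ is not the logarithm of a positive integer.
\end{itemize}
It says nothing about $a=\log v$ with $v\notin\chi_P$; compare the ``missing pieces'' in the example after Definition~\ref{d1}. The paper's proof stops exactly at the outer bound $\{a=\log v:\ v\in\mathbb{Z}\}$ and the inner bound from Lemma~\ref{h1h2}. Your proof is therefore complete without that step.

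You should drop it anyway, because the sketch does not work, for two reasons.
\begin{itemize}
\item On this face $\mathbf{h}(N_n)-\mathbf{h}(N_n\setminus 1)=b>0$, so $X_1$ is not a function of $X_{N_n\setminus 1}$.
\item Your coarsening need not have entropy $a$. On $(U_{2,3},U_{1,1}^3)$, let $X_1=\sigma$ be uniform over the six permutations of $\mathbb{Z}_3$, let $X_2$ be uniform on $\mathbb{Z}_3$ and independent of $\sigma$, and set $X_3=\sigma(X_2)$. This realizes $(a,b)=(\log 3,\log 2)$. Distinct permutations give distinct conditional laws of $(X_2,X_3)$, so your $Y_1$ equals $X_1$ and $H(Y_1)=\log 6\neq a$.
\end{itemize}
The stronger claim is a genuinely different, combinatorial question. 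For instance, for $M=U_{2,4}$ and $v=6$ it amounts to showing that no Latin square of order $6$ admits a fractional decomposition of its cells into transversals.
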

\begin{proof}
We prove the theorem in the cases that $1\in N_n$ is either a loop of $M$ or not.
\begin{enumerate}
	\item If $1\in N_n$ is a loop of $M$, let $i,j\in N_n$ that are neither parallel nor loops, that is, $\mathbf{r}_M(i,j)=2$. Then, by Lemma \ref{connected}, there must be a cycle $C$ containing $\{i,j\}$ with $|C|=k\geq 3$. For $\mathbf{h}=(a,b)\in F$, $\mathbf{h}(A)=a\mathbf{r}_M(A)+b\mathbf{r}_{U_{1,1}^n}(A)$. For all $A\subseteq C$, as $1\notin C$, $\mathbf{r}_{U_{1,1}^n}(A)=0$. So restricting $\mathbf{h}$ on $C$, we obtain $\mathbf{h}_C$ is the rank function of a $U_{k-1,k}$ on $C$. Thus by \cite[Proposition 1]{chen2021matroidal}, if $\mathbf{h}$ is entropic, $a=\log v$ for positive integer $v$. 
	\item If $1\in N_n$ is not a loop of $M$, let $i\in N_n$ that is neither parallel with $1$ nor loop, that is, $\mathbf{r}_M(1,j)=2$. Then, by Lemma \ref{connected}, there must be a cycle $C$ containing $\{1,i\}$ with $|C|=k\geq 3$. For $\mathbf{h}=(a,b)\in F$, $\mathbf{h}=a\mathbf{r}_M+b\mathbf{r}_{U_{1,1}^n}$. Restricting $\mathbf{h}$ on $C$, we obtain $\mathbf{h}_C=a\mathbf{r}_1+b\mathbf{r}_2$, where $r_i$, $i=1,2$ are the rank functions of $U_{k-1,k}$ and $U_{1,1}^n$ on $C$, respectively. Thus, by Lemma \ref{u11}, if $\mathbf{h}$ is entropic, $a=\log v$ for positive integer $v$. 
\end{enumerate}

Hence $\{(a,b): a=\log v,b\geq 0, v\in\mathbb{Z}\}$ forms an outer bound on $F^*$. The inner bound $\{(a,b):a=\log v,b\geq 0, v\in \chi_M$\} is immediately implied be Lemma \ref{h1h2} and $\chi_M$. Hence, $F$ is Chen-Yeung-type.
\end{proof}
\addtolength{\rightmargin}{0.03in}
\subsection{Entropy functions on $(M,U_{1,2}^n)$}

For any matroid $M$ in an extreme ray of $\Gamma_n$, $(M,U_{1,2}^n)$ is a 2-dimensional face. 
\begin{lemma}\label{circuit not contain}
	For any two distinct matroids $M_1$ and $M_2$ in distinct extreme rays of $\Gamma_n$, let $\mathcal{C}_1$ and $\mathcal{C}_2$ be the family of circuits of $M_1$ and $M_2$, respectively. Then $\mathcal{C}_1 \not\subseteq \mathcal{C}_2$ and $\mathcal{C}_2 \not\subseteq \mathcal{C}_1$.
\end{lemma}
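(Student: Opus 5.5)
The plan is to argue by contradiction. Suppose $\mathcal{C}_1\subseteq\mathcal{C}_2$; the case $\mathcal{C}_2\subseteq\mathcal{C}_1$ is symmetric. I will show $M_1=M_2$, which contradicts the assumption that they lie in distinct extreme rays, since each extreme ray is identified with a unique minimal integer polymatroid. Write $L_i$ for the loops of $M_i$ and $E_i=N_n\backslash L_i$. By Lemma \ref{connected}, $M_i$ restricted to $E_i$ is connected.

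\textbf{Step 1: the ground sets agree, $E_1=E_2$.} Under the paper's definition, a loop $\{e\}$ is a circuit, since $\mathbf{r}(\emptyset)=0=|\{e\}|-1=\mathbf{r}(\{e\})$. Hence $L_1\subseteq L_2$. Conversely, take $e\in E_1$. Since $M_1|E_1$ is connected and has rank at least $1$, the element $e$ lies in some circuit $C\in\mathcal{C}_1$ with $|C|\ge 2$; the degenerate case $|E_1|=1$ cannot occur for a connected rank-$1$ matroid on one element together with a circuit, and is handled separately in any case. As $C\in\mathcal{C}_2$ and a circuit of size at least $2$ contains no loop, $e\notin L_2$. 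Therefore $L_1=L_2$ and $E:=E_1=E_2$.

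\textbf{Step 2: compare dependence.} Every $M_1$-dependent set contains an $M_1$-circuit, which is also an $M_2$-circuit, so it is $M_2$-dependent. Hence every $M_2$-independent set is $M_1$-independent, and $\mathbf{r}_{M_2}\le\mathbf{r}_{M_1}$. Also, no $M_2$-circuit $D$ can properly contain an $M_1$-circuit, because circuits of $M_2$ form an antichain. So any $D\in\mathcal{C}_2\backslash\mathcal{C}_1$ is $M_1$-independent.

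\textbf{Step 3: the main obstacle is ruling out such a $D$.} Extend $D$ to a basis $B$ of $M_1|E$. For every $f\in E\backslash B$, the fundamental circuit $C_1(f,B)\subseteq B\cup f$ is an $M_2$-circuit, so $f\in\mathrm{cl}_{M_2}(B)$. Thus $\mathbf{r}_{M_2}(E)=\mathbf{r}_{M_2}(B)<|B|=\mathbf{r}_{M_1}(E)$, the strict inequality holding because $D\subseteq B$ is $M_2$-dependent.

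I would then use connectivity of $M_1$ to derive a contradiction. Pick $d\in D$. By connectivity there is $f\notin B$ with $d\in C_1(f,B)$; this holds because otherwise $d$ would be a coloop of $M_1|E$, which is impossible in a connected matroid with more than one element. Now $D$ and $C_1(f,B)$ are distinct $M_2$-circuits sharing $d$. Applying strong circuit elimination in $M_2$ gives an $M_2$-circuit $D'\subseteq (D\cup C_1(f,B))\backslash d$ containing $f$. I then want to iterate, exchanging $B$ for $B-d+f$, to push $D$ out of $B$ while keeping a $D$-type circuit inside the new basis. The aim is to reach a basis all of whose $M_2$-circuits are forced to be fundamental $M_1$-circuits, which gives the contradiction.

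If this exchange argument becomes unwieldy, I would fall back on Lehman's theorem. For a connected matroid and a fixed element $e$, the circuits through $e$ determine the matroid. It would then suffice to show that the circuits through $e$ of $M_1|E$ and of $M_2|E$ coincide, using the inclusion together with the antichain property.

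Once $\mathcal{C}_1=\mathcal{C}_2$, the matroids coincide, since a matroid is determined by its circuits. This gives $M_1=M_2$, the desired contradiction.
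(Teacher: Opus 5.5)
There is a genuine gap, and it cannot be closed. Step 3 never produces a contradiction: the exchange iteration is only stated as an aim, and the Lehman fallback only says what ``would suffice''. No argument can produce one, because the statement is false.

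\textbf{Counterexample.} Take $n=6$. Let $M_1=M(K_{2,3})$ be the cycle matroid of $K_{2,3}$ with parts $\{u,v\}$ and $\{1,2,3\}$, so the ground set is the six edges $ui,vi$. Let $M_2=U_{3,6}$ on the same six elements.
\begin{itemize}
\item Since $K_{2,3}$ is $2$-connected, $M_1$ is loopless and connected. By Lemma \ref{connected}, both matroids lie in extreme rays of $\Gamma_6$.
\item These rays are distinct: both rank functions equal $1$ on singletons, but the total ranks are $4$ and $3$, so they are not proportional.
\item The only cycles of the bipartite graph $K_{2,3}$ are the three $4$-cycles $\{ui,vi,uj,vj\}$. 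So $\mathcal{C}_1$ consists of three $4$-element sets, while $\mathcal{C}_2$ is the family of all $4$-element subsets. Hence $\mathcal{C}_1\subsetneq\mathcal{C}_2$.
\end{itemize}
In your notation, $D=\{u1,u2,u3,v1\}\in\mathcal{C}_2\backslash\mathcal{C}_1$ is itself a basis of $M_1$. We get $\mathbf{r}_{M_2}(E)=3<4=\mathbf{r}_{M_1}(E)$, exactly as Step 3 predicts, and nothing contradicts connectivity. The Lehman fallback fails as well: each element lies in $2$ circuits of $M_1$ but in $\binom{5}{3}=10$ circuits of $M_2$.

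This is not an isolated example. If $M$ is any connected matroid with no spanning circuit, its truncation $T(M)$ is connected and $\mathcal{C}(T(M))=\mathcal{C}(M)\cup\mathcal{B}(M)\supsetneq\mathcal{C}(M)$.

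\textbf{Comparison with the paper.} Your Steps 1--2 and the rank inequality in Step 3 are correct, and they reach exactly the point where the paper's proof stands. The hypothesis $\mathcal{C}_1\subseteq\mathcal{C}_2$ gives $\mathrm{cl}_{M_1}(X)\subseteq\mathrm{cl}_{M_2}(X)$ for all $X$. Hence $\mathbf{r}_{M_2}(N_n)<\mathbf{r}_{M_1}(N_n)$ as soon as some $D\in\mathcal{C}_2\backslash\mathcal{C}_1$ exists.

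The paper then finishes by asserting that $M_1$ has a circuit of size $\mathbf{r}_{M_1}(N_n)+1$, which would be longer than every circuit of $M_2$. That assertion is your missing step in another form, and it is false: a connected matroid need not have a spanning circuit. $M(K_{2,3})$ has rank $4$ and all its circuits have size $4$.

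What the argument does establish is a weaker, correct statement: if $\mathcal{C}_1\subseteq\mathcal{C}_2$ and either $\mathbf{r}_{M_1}(N_n)=\mathbf{r}_{M_2}(N_n)$ or $M_1$ has a spanning circuit, then $M_1=M_2$. To salvage the lemma you would need to add such a hypothesis. You would then have to check that it holds wherever the lemma is invoked, e.g.\ in Proposition \ref{u12face}.
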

\begin{proof}
	We prove the lemma by contradiction. Assume $\mathcal{C}_1\subseteq\mathcal{C}_2$, then there exists $C\in\mathcal{C}_2\backslash\mathcal{C}_1$ and $\mathbf{r}_{M_2}(C)=|C|-1$. If $C$ is dependent in $M_1$, then there exists $C'\subseteq C$ such that $C'\in\mathcal{C}_1$, which implies $C'\in \mathcal{C}_2$ and both $C$ and $C'\in\mathcal{C}_2$, contradicting to \cite[Corallary 1.1.5]{oxley2011matroid} which claims that two distinct circuits of a matroid can not contain each other. Then $C$ is independent in $M_1$ and so $\mathbf{r}_{M_1}(C)=|C|$.

For any $i\in N_n\backslash C$, by \cite[Lemma 1.4.2, Proposition 1.4.11]{oxley2011matroid}, we obtain
\begin{align}
\mathbf{r}(C\cup i)-\mathbf{r}(C)= 
\begin{cases}
0, & \text{if there exists circuit $C'$ such }\\
   &	\text{that $C'\not\subseteq C$ but $C'\subseteq C+i$,}\\
1, & \text{otherwise,}
\end{cases}
\end{align}
then, we can get the following inequalities by the fact that $\mathcal{C}_1\subseteq\mathcal{C}_2$,
\begin{align}
\mathbf{r}_{M_1}(C\cup i)-\mathbf{r}_{M_1}(C)\geq\mathbf{r}_{M_2}(C\cup i)-\mathbf{r}_{M_2}(C).
\end{align}
Furthermore, for $S\subseteq N_n\backslash C$, we obtain
\begin{align}\label{eq:guodu}
\mathbf{r}_{M_1}(C\cup S)-\mathbf{r}_{M_1}(C)\geq\mathbf{r}_{M_2}(C\cup S)-\mathbf{r}_{M_2}(C).
\end{align}
Assume $\mathbf{r}_{M_2}(N_n)=k$. Substituting $S=N_n\backslash C$ into inequality \eqref{eq:guodu}, we obtain
\begin{align}
\mathbf{r}_{M_1}(N_n)-(|C|-1)\geq k-|C|.
\end{align}
Thus, $\mathbf{r}_{M_1}(N_n)=k'\geq k+1$ and there exist a circuit $C'$ of length $k'+1$ in $M_1$. Obviously, the maximum length of circuits in $M_2$ is $k+1\leq k'+1$, thus $C'$ is not in $M_2$.
Therefore, $\mathcal{C}_1\not\subseteq\mathcal{C}_2$ and $\mathcal{C}_2\not\subseteq\mathcal{C}_1$.
\end{proof}

\begin{proposition}\label{u12face}
	For any matroid $M$ in an extreme ray of $\Gamma_n$, $(M,U_{1,2}^n)$ is a 2-dimensional face.
\end{proposition}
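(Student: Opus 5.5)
The plan is to argue exactly as in Proposition \ref{u11 face}, by showing that the smallest face $F_0$ of $\Gamma_n$ containing both $M$ and $U_{1,2}^n$ has dimension $2$. Equivalently, the linear space cut out by the facet equalities shared by $\mathbf{r}_M$ and $\mathbf{r}_{U_{1,2}^n}$ should be $\mathrm{span}(\mathbf{r}_M,\mathbf{r}_{U_{1,2}^n})$. We may assume $M\neq U_{1,2}^n$.

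\textbf{Step 1: the facets containing $U_{1,2}^n$.} Since elements $1$ and $2$ are parallel, $U_{1,2}^n$ is tight, so it lies in every $F(i)$. An elemental inequality \eqref{eq:u1.2} is strict on $\mathbf{r}_{U_{1,2}^n}$ exactly when $\{i,j\}=\{1,2\}$ and $K\cap\{1,2\}=\varnothing$. Hence $F_0$ is the intersection of the facets containing $M$, with the facets $F(1;2|K)$, $K\subseteq N_n\backslash\{1,2\}$, removed.

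\textbf{Step 2: the modular case.} If $M$ is modular, then $M=U_{1,1}^{k,n}$. This case is checked by hand as in Proposition \ref{u11 face}: the two rays are contained in all facets except $F(k)$ and the $F(1;2|K)$. A direct computation shows that the remaining equalities force $\mathbf{h}(A)=\mathbf{h}(A\cap\{1,2,k\})$ and leave only two free parameters.

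\textbf{Step 3: the tight case.} By Lemma \ref{modti}, otherwise $M$ is tight, and by Lemma \ref{connected}, $M$ with its loops deleted is connected. Take $\mathbf{h}$ in the linear span of $F_0$. Let $\mathcal{K}=\{K\subseteq N_n\backslash\{1,2\}: I_{\mathbf{r}_M}(1;2|K)=0\}$; these are the dropped equalities. The key claim is that $I_{\mathbf{h}}(1;2|K)$ takes a common value $c$ for all $K\in\mathcal{K}$. Granting the claim, set $\mathbf{g}=\mathbf{h}-c\,\mathbf{r}_{U_{1,2}^n}$. Then $I_{\mathbf{g}}(1;2|K)=0$ for every $K\in\mathcal{K}$. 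Every other equality satisfied by $M$ holds for $\mathbf{g}$ as well, because it holds for both $\mathbf{h}$ and $\mathbf{r}_{U_{1,2}^n}$. Since $M$ spans an extreme ray, the solution space of its equalities is one-dimensional, so $\mathbf{g}\in\mathrm{span}(\mathbf{r}_M)$. Therefore $\mathbf{h}\in\mathrm{span}(\mathbf{r}_M,\mathbf{r}_{U_{1,2}^n})$, and $F_0$ is 2-dimensional. (If $\mathcal{K}=\varnothing$, the same argument with $c=0$ works trivially.)

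\textbf{Proving the claim.} I would use the symmetric identity for interaction information,
\begin{align*}
I(1;2|K)-I(1;2|K\cup k)=I(1;k|K)-I(1;k|K\cup 2),
\end{align*}
valid for any set function. Suppose $K$ and $K\cup k$ both lie in $\mathcal{K}$, and $M$ satisfies $I(1;k|K)=I(1;k|K\cup2)=0$. These two equalities are not dropped, so they hold on $F_0$, and hence $I_{\mathbf{h}}(1;2|K)=I_{\mathbf{h}}(1;2|K\cup k)$. The main obstacle is to show that $\mathcal{K}$ is connected under such moves. For a matroid, $I_{\mathbf{r}_M}(1;2|K)=0$ means that $1$ or $2$ is a loop of $M/K$, or that they are not in a common circuit of $M/K$ (equivalently, in different components). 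I would first try to move every $K\in\mathcal{K}$ toward a canonical set, for instance one containing a basis that spans neither $1$ nor $2$ appropriately. This would add elements one at a time, using connectivity of $M$ and circuit elimination to choose $k$ with $I(1;k|K)=I(1;k|K\cup2)=0$, and possibly swapping the roles of $1$ and $2$ via the symmetric identity with $k$ attached to element $2$. If this direct path argument stalls, the fallback is to mimic Lemma \ref{circuit not contain}. Assume an extra extreme ray $P'\in F_0$. Show that $P'$ satisfies all equalities of $M$ except some $F(1;2|K)$. Then derive a containment between the circuit families of $M$ and of the relevant matroid, contradicting Lemma \ref{circuit not contain}.
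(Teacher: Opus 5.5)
\textbf{There is a genuine gap: the key claim is left unproved, and the path strategy you propose for it cannot work.}

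Your reduction itself is correct. Step 1 identifies exactly the facets $F(1;2|K)$ that $U_{1,2}^n$ misses, and the modular case checks out. In Step 3, subtracting $c\,\mathbf{r}_{U_{1,2}^n}$ and using that the solution space of the equalities of $M$ is one-dimensional does finish the argument once the claim is known. However, the claim is equivalent to the proposition: if $\mathbf{h}=\alpha\mathbf{r}_M+\beta\mathbf{r}_{U_{1,2}^n}$, then $I_{\mathbf{h}}(1;2|K)=\beta$ on $\mathcal{K}$. So everything rests on the step you leave open.

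Your moves only link $K$ and $K\cup k$ when both lie in $\mathcal{K}$. Take $M=U_{2,4}$, one of the paper's running examples. There $I_{\mathbf{r}_M}(1;2)=I_{\mathbf{r}_M}(1;2|34)=0$ but $I_{\mathbf{r}_M}(1;2|3)=I_{\mathbf{r}_M}(1;2|4)=1$. Hence $\mathcal{K}=\{\varnothing,\{3,4\}\}$ has no edges at all under your moves. Similarly, for $U_{3,5}$ the set $\{3,4,5\}\in\mathcal{K}$ has no neighbour in $\mathcal{K}$, since no $2$-subset lies in $\mathcal{K}$.

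The claim is still true for $U_{2,4}$, but only by a global computation:
\begin{itemize}
\item The $F(i)$ force every triple to equal $T=\mathbf{h}(N_4)$.
\item The equalities $I(i;j|kl)=0$ with $\{i,j\}\neq\{1,2\}$ then give $\mathbf{h}(12)=\mathbf{h}(13)=\mathbf{h}(14)=\mathbf{h}(23)=\mathbf{h}(24)=T$.
\item The equalities $I(i;j)=0$ with $\{i,j\}\neq\{1,2\}$ give $\mathbf{h}(1)=\mathbf{h}(2)=T-s$ and $\mathbf{h}(3)=\mathbf{h}(4)=s$.
\item Hence both $I_{\mathbf{h}}(1;2)$ and $I_{\mathbf{h}}(1;2|34)$ equal $T-2s$.
\end{itemize}
A general proof must use the tightness equalities and the structure of $M$ in this global way, not a path argument inside $\mathcal{K}$.

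\textbf{Two further points.}

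First, your matroid description of $\mathcal{K}$ is wrong. For a matroid, $I_{\mathbf{r}_M}(1;2|K)\in\{0,1\}$, and it equals $1$ exactly when $1$ and $2$ are parallel non-loops of $M/K$. Sharing a circuit is not the criterion. In $U_{2,3}$ with $K=\varnothing$, the elements $1,2$ lie in the circuit $\{1,2,3\}$ and in the same component, yet $I_{\mathbf{r}_M}(1;2)=0$. A search for moves guided by your description would be misdirected.

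Second, your fallback is the paper's route, named but not carried out. The paper assumes a third extreme ray $M'$ in the minimal face. It uses Lemma~\ref{circuit not contain} to pick a circuit $C$ of $M'$ that is not a circuit of $M$. It then exhibits a facet $F(x;y|C\backslash\{x,y\})$ with $\{x,y\}\neq\{1,2\}$ that contains $M$ and $U_{1,2}^n$ but not $M'$, and treats $C=\{1,2\}$ separately. That route, and your sketch of it, presupposes that the extra extreme ray is a matroid. This is not automatic for $n\geq 4$. Your dimension-count formulation would avoid that assumption, but only once the key claim is actually proved.
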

\begin{proof}
	$U_{1,2}$ is contained in all facets except for $F(1;2|K)$, $K\subseteq N_n\backslash \{1,2\}$.
	We prove the proposition by contradiction. Assume that the minimal face containing both $M$ and $U_{1,2}^n$ also contains another extreme ray $M'\neq M,U_{1,2}^n$. By Lemma \ref{circuit not contain}, there exists $C\subseteq N_n$ such that $C$ is a circuit of $M'$ but not of $M$.

Assume there exists $C\neq\{1,2\}$ a circuit of $M'$ but not of $M$, it can be checked that for all $x,y\in C$, $\{x,y\}\neq\{1,2\}$, $F(x;y|C\backslash\{x,y\})$ does not contain $M'$. If $C$ is independent in $M$, for any $x,y\in C$, $\{x,y\}\neq\{1,2\}$, $F(x;y|C\backslash\{x,y\})$ contains $M,U_{1,2}^n$, a contradiction. If $C$ is dependent in $M$, that is there exist $C'\subseteq C$ a circuit of $M$, it can be checked that for $x\in C'$, $y\in C\backslash C'$, $\{x,y\}\neq\{1,2\}$, $F(x;y|C\backslash\{x,y\})$ contains $M,U_{1,2}^n$, a contradiction as well.

Assume only $C =\{1,2\}$ is a circuit of $M'$ but not of $M$, that is $1$ and $2$ are parallel in $M'$. Since $M'\neq U_{1,2}^n$, there exists a circuit $C'$ of length $k+1$ of $M'$ and $M$ with $1\in C'$. Then there exists another circuit $C''$ of $M'$ such that $C''=(C'\backslash\{1\})\cup\{2\}$ and  $C''$ is also a circuit of $M$. Since $C=\{1,2\}$ is not a circuit of $M$ but $C'$ and $C''$ are circuits of $M$, by \cite[Lemma 1.1.3]{oxley2011matroid}, $(C''\cup{1})\backslash x$ with $x\in C''\backslash\{2\}$ is $U_{k,k+1}$ in $M$ and $C''\cup\{1\}$ is $U_{k,k+2}$ in $M$. Then it can be checked that $F(x;y|C''\cup \{1\}\backslash\{x,y\})$, $x,y\in C''\backslash\{2\}$ contain both $M$ and $U_{1,2}^n$ but not $M'$, a contradiction.

Hence, $(M,U_{1,2}^n)$ is a 2-dimensional face of $\Gamma_n$.
	
\end{proof}

\begin{lemma}\label{u12}
For $F=(U_{n-1,n},U_{1,k}^n)$ with $1<k\leq n$ and $n\geq 4$, $\mathbf{h}=(a,b)\in F$ is entropic if and only if $a=\log v$ for positive integer $v$, that is, $F$ is Chen-Yeung-type. 
\end{lemma}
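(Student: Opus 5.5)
The plan is to reduce to a pointwise (probability-mass) argument, as in Lemma \ref{u11}. The ``if'' direction is immediate: $U_{n-1,n}$ is entropic at $a=\log v$ for every $v\ge 2$ (take $X_1,\dots,X_{n-1}$ i.i.d.\ uniform on $\mathbb{Z}_v$ and $X_n$ their sum). The whole ray $U_{1,k}^n$ is entropic, since we may take one random variable repeated on $\alpha$ and constants elsewhere. Lemma \ref{h1h2} then gives every $(\log v,b)$ with $b>0$. So the work is the ``only if'' direction: if $\mathbf{h}=a\mathbf{r}_{U_{n-1,n}}+b\mathbf{r}_{U_{1,k}^n}$ with $a,b>0$ is entropic, then $a=\log v$.

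\textbf{Collecting the equalities.} First I list the facets containing both rays. Since $k\ge 2$, deleting a single element from $N_n$ still meets $\alpha$, so $U_{1,k}^n$ is tight; hence $\mathbf{h}\in F(i)$ for all $i$, i.e.\ $p(x_{N_n})=p(x_{N_n\backslash i})$ on the support. Among the facets $F(i;j|K)$ with $|K|\le n-3$ (those containing $U_{n-1,n}$), $U_{1,k}^n$ lies in the ones where $K\cap\alpha\neq\varnothing$, or $i\notin\alpha$, or $j\notin\alpha$. Relabel so that $1\in\alpha$. Because $n\ge 4$, the choice $K=\{1\}$ is allowed, so for all distinct $i,j\neq 1$ we get $p(x_ix_1)p(x_jx_1)=p(x_1)p(x_ix_jx_1)$. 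The same holds for any $K\ni 1$ with $|K|\le n-3$.

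\textbf{Conditioning on $X_1$.} For each $x_1$ in the support, consider the conditional distribution of $X_{N_n\backslash 1}$ given $X_1=x_1$. The pointwise equalities above are exactly the equalities defining the face spanned by $U_{n-2,n-1}$ on the ground set $N_n\backslash 1$, which has size $n-1\ge 3$. Consistently, the averaged conditional entropy function is $\mathbf{h}(A\cup 1)-\mathbf{h}(1)=a\min(n-2,|A|)$, because the $b$-term cancels when $1\in\alpha$. Running the argument of Lemma \ref{u11} (i.e.\ \cite[Proposition 1]{chen2021matroidal}) inside each fiber, via Lemma \ref{uniform}, shows the following: given $x_1$, the $X_i$ ($i\neq 1$) are pairwise conditionally independent and each is uniform on a support of a common size $v(x_1)$. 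Hence $a=H(X_2|X_1)=\sum_{x_1}p(x_1)\log v(x_1)$.

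\textbf{Showing $v(x_1)$ is constant.} This is the main obstacle. If $k<n$, I pick $2\notin\alpha$. Then $\mathbf{h}(1)+\mathbf{h}(2)=\mathbf{h}(12)=2a+b$, so $X_1$ and $X_2$ are independent. The conditional support of $X_2$ given $x_1$ is therefore the full support of $X_2$, so $v(x_1)=|\mathcal{X}_2|$ is constant and $a=\log v$.

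If $k=n$, every element is in $\alpha$, so no element is independent of $X_1$. Here I would exploit the symmetry of the face and chain across fibers. Conditioning on $X_2$ instead gives sizes $w(x_2)$ by the same argument. Fix $x_1$ and pick $x_2$ in its fiber. Conditional independence of $X_2$ and $X_3$ given $X_1$ (and symmetrically given $X_2$) says that the support of $X_3$ given $(x_1,x_2)$ equals both the $v(x_1)$-set and the $w(x_2)$-set. Hence $v(x_1)=w(x_2)$ whenever $p(x_1x_2)>0$. The support graph on $\mathcal{X}_1\times\mathcal{X}_2$ should be connected, with tightness $p(x_{N_n})=p(x_{N_n\backslash i})$ used to rule out splitting into disjoint blocks. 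If it is, all $v(x_1)$ coincide and $a=\log v$.

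If connectivity fails, I would fall back on a decomposition. The components would split the distribution into a mixture in which $X_1$ determines a block label $Z$ common to all $X_i$. Subtracting $H(Z)$ from every nonempty set would keep the $b$-type structure and leave a uniform conditional $U_{n-1,n}$ part, which again forces a constant $v$.
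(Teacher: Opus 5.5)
Your argument is correct for $1<k<n$. The case $k=n$, however, has a gap that cannot be closed, because the statement is false there.

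\textbf{The failing step.} You claim that tightness forces the support graph on $\mathcal{X}_1\times\mathcal{X}_2$ to be connected. This is false. Your fallback also fails: after splitting off a block label $Z$, the remaining conditional $U_{n-1,n}$ part does \emph{not} force a constant $v$. A mixture of $U_{n-1,n}$-distributions with different alphabet sizes, each tagged with the label, is tight and satisfies every equality you listed.

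\textbf{Counterexample.} Let $Z$ be uniform on $\{0,1\}$, and set $v_0=2$, $v_1=3$. Given $Z=z$, let $Y_1,\dots,Y_{n-1}$ be i.i.d.\ uniform on $\mathbb{Z}_{v_z}$, let $Y_n=\sum_{i<n}Y_i$, and put $X_i=(Z,Y_i)$. Then for every $A\neq\varnothing$,
$H(X_A)=\log 2+\tfrac12\log 6\cdot\min\{|A|,n-1\}$.
So $(a,b)=(\tfrac12\log 6,\log 2)$ is entropic on $(U_{n-1,n},U_{1,n}^n)$, yet $\sqrt6$ is not an integer.

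More generally, choosing $H(Z)=b$ and $\sum_z p(z)\log v_z=a$ realizes every $a,b>0$, so this face is all-entropic. The paper itself says exactly this in the proof of Theorem~\ref{11}: it invokes Lemma~\ref{all}, since $U_{1,n}$ has no loops. The lemma therefore only holds for $1<k<n$. You should drop the $k=n$ case rather than try to repair it.

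\textbf{Comparison with the paper for $1<k<n$.} Your route differs from the paper's.
\begin{itemize}
\item The paper uses two elements $n-1,n\notin\alpha$. It derives $p(x_1x_{n-1})=p(x_1x_n)$, $X_1\perp X_{n-1}$ and $X_1\perp X_n$, hence $p(x_{n-1})=p(x_n)$ on the support. It then applies Lemma~\ref{uniform} to the independent pair $(X_{n-1},X_n)$. This needs $\{1,n-1\}\not\subseteq\alpha$, i.e.\ $k\le n-2$.
\item You condition on $X_1\in\alpha$. Each fiber then carries a pointwise $U_{n-2,n-1}$ structure on $n-1\ge 3$ elements, coming from the equalities with $K\ni 1$, $|K|\le n-3$; this is where $n\ge 4$ enters. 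That gives conditional uniformity with fiber size $v(x_1)$ and $a=\mathbb{E}\log v(X_1)$. You then need only \emph{one} element outside $\alpha$, independent of $X_1$, to make $v(x_1)$ constant.
\end{itemize}
Your version therefore also covers $k=n-1$, which the paper treats separately in Lemma~\ref{u1n-1} with a heavier graph-component argument. It also shows transparently why $k=n$ fails: with no element independent of $X_1$, $v(x_1)$ is free to vary across blocks.
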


\begin{proof}

If $\mathbf{h}\in F$ is entropic, its characterizing random vector($X_i,i\in N_n$) satisfies the following information equalities,
\begin{align}
H(X_{N_n})&=H(X_{N_{n}-i}),i\in N_k \label{eq:u12.1} \\
H(X_{i\cup K})+H(X_{j\cup K})&=H(X_K)+H(X_{ij\cup K}).\label{eq:u12.3}\nonumber \\ 
|K|\geq n-3,&i,j\in N_n\backslash K\text{ and }\nonumber\\
K \not\subseteq N_n \setminus N_k &\text{ or } i\notin N_k \text{ or }j \notin N_k
\end{align}
For $(x_i,i\in N_n)\in \mathcal{X}_{N_n}$, with $p(x_{N_n})>0$, above information equalities imply the probability mass function satisfies
\begin{align}
p(x_{N_n})&=p(x_{N_{n}-i}),i\in N_k  \\
p(x_ix_K)p(X_jx_K)&=p(x_K)p(x_ix_jx_K).\nonumber \\ 
|K|\geq n-3&,i,j\in N_n\backslash K\text{ and }\nonumber\\
K \not\subseteq N_n \setminus N_k &\text{ or } i\notin N_k \text{ or }j \notin N_k
\end{align}
Routine calculation leads to
\begin{align}
p(x_{1(n-1)})&=p(x_{1n}), \label{eq:u12.5} \\
p(x_{1(n-1)})&=p(x_1)p(x_{n-1}), \label{eq:u12.6} \\
p(x_{1n})&=p(x_1)p(x_n).\label{eq:u12.7}
\end{align}
Then, we can get
\begin{align}
p(x_{n-1})=p(x_n).\label{eq:u12.8}
\end{align}
Note that $X_{n-1}$ and $X_n$ are independent. By Lemma \ref{uniform}, $X_{n-1}$ and $X_n$ are uniformly distributed on $\mathcal{X}_{n-1} $ and $\mathcal{X}_n$, respectively, and so $H(X_{n-1})=H(X_n)=\log v$ where $v=|\mathcal{X}_{n-1}|=|\mathcal{X}_n|$.

The "only if" part is immediately implied by Lemma \ref{h1h2} and the fact that $a = \log v$ on the ray $U_{n-1,n}$, and the whole ray $U_{1,k}$ is entropic.
\end{proof}

\begin{theorem}
For any matroid $M$ with rank $r\geq 2$ in an extreme ray of $\Gamma_n$, ($M,U_{1,2}^n$) is 
\begin{itemize}
	\item Mat\'{u}\v{s}-type if $\mathbf{r}_M(N_n)=2$ and there exists a cycle $C=\{1,2,i\}$, such that, $\forall j\in N_n\backslash C$ in $M$, $\{j\}$ is either parallel with $\{i\}$ or a loop;
	\item Chen-Yeung-type, otherwise.
\end{itemize}
\end{theorem}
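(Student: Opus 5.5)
We split on the structure of $M$ and work by restriction to small subsets, as in the proof of Theorem \ref{u11cy}. By Lemma \ref{connected}, $M$ is connected after deleting its loops, so every pair of nonloops lies in a common circuit.

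\emph{The Mat\'{u}\v{s}-type case.} Here $\mathbf{r}_M(N_n)=2$ and $C=\{1,2,i\}$ is a circuit. Every other $j$ is a loop or parallel to $i$. Then $\mathbf{h}=(a,b)$ is obtained from the rank function $a\mathbf{r}_{U_{2,3}}+b\mathbf{r}_{U_{1,2}^3}$ on $\{1,2,i\}$ in two steps: each $j$ parallel to $i$ becomes a copy $X_j=X_i$, and each loop becomes a constant. These duplications preserve entropicness in both directions. Hence $F^*$ is exactly the image of the face $(U_{2,3},U_{1,2}^3)$, which is Mat\'{u}\v{s}-type by \cite{matus2005piecewise}. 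Also $\chi_M=\chi_{U_{2,3}}$, since duplicating and adding constants do not change which $\log v\cdot\mathbf{r}$ are entropic. So $F$ is Mat\'{u}\v{s}-type.

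\emph{The Chen-Yeung-type case.} The inner bound $\{(\log v,b):v\in\chi_M,b\ge0\}$ follows from Lemma \ref{h1h2}, because the ray $U_{1,2}^n$ is entropic. For the outer bound we must find a subset $D$ on which the restriction of $\mathbf{h}$ is a face of Chen-Yeung type: either $U_{k-1,k}$ alone (which forces $a=\log v$ by \cite[Proposition 1]{chen2021matroidal}), or $(U_{k-1,k},U_{1,k'})$ with $k\ge4$ via Lemma \ref{u12}, or $(U_{k-1,k},U_{1,1})$ via Lemma \ref{u11}. We split into subcases.

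\begin{enumerate}
\item Suppose some circuit $D$ with $|D|\ge3$ avoids $\{1,2\}$ entirely. Then $\mathbf{h}_D=a\mathbf{r}_{U_{|D|-1,|D|}}$, and we are done.
\item Suppose some circuit $D$ with $|D|\ge3$ meets $\{1,2\}$ in exactly one element. Then the restriction is $(U_{|D|-1,|D|},U_{1,1}^{|D|})$ after relabeling, and Lemma \ref{u11} applies.
\item Suppose some circuit $D$ contains $\{1,2\}$ with $|D|\ge4$. Then Lemma \ref{u12} applies with $k=2$.
\end{enumerate}

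It remains to show that if all three subcases fail, $M$ falls under the Mat\'{u}\v{s} case. Then every circuit of size at least $3$ is a triangle containing $\{1,2\}$. Since $M$ is connected after deleting loops and has rank $r\ge2$, the elements $1$ and $2$ are not parallel, and some triangle $\{1,2,i\}$ exists. Take another nonloop $j\notin\{1,2,i\}$. If $j$ were not parallel to $i$, we would consider the circuits through $j$ and $i$. Circuit elimination applied to $\{1,2,i\}$ and $\{1,2,j\}$ (or to circuits through $j$) yields a circuit avoiding $1$ or $2$ of size at least $3$, which contradicts the assumption. The same argument handles the possibility that $j$ is parallel to $1$ or $2$, which would produce a triangle $\{j,2,i\}$ meeting $\{1,2\}$ once. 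Rank $2$ follows because a rank $\ge3$ connected matroid has a circuit of size at least $4$ or two triangles not both containing $\{1,2\}$.

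\emph{Main obstacle.} We expect the hardest step to be this final matroid-theoretic reduction: showing, using only circuit elimination and connectivity, that the failure of all three restriction subcases forces exactly the configuration in the first bullet. Parallel classes of $1$ and $2$ need separate care there.
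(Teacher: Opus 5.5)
Your proof is correct. It uses the same analytic ingredients as the paper: restrict to a circuit, then apply \cite[Proposition 1]{chen2021matroidal}, Lemma~\ref{u11}, Lemma~\ref{u12}, or Mat\'{u}\v{s}'s theorem. The decomposition, however, is different.

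The paper splits on how $1,2$ sit in $M$. If they are parallel, or one is a loop, it deletes $1$ and invokes Theorem~\ref{u11cy} for $(M\setminus 1,U_{1,1}^{n-1})$. If $\mathbf{r}_M(\{1,2\})=2$, it splits on rank: for rank $>2$ it uses a circuit of size $\ge 4$ with Lemma~\ref{u12}; for rank $2$ it uses the triangle $\{1,2,i\}$, where a bad $j$ yields the triangle $\{2,i,j\}$ and Lemma~\ref{u11}. You instead split on which kind of witness circuit exists, and then prove the combinatorial converse.

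Your route gains some cleanliness:
\begin{itemize}
\item The parallel and loop cases need no separate treatment, since every circuit of size $\ge 3$ then meets $\{1,2\}$ at most once.
\item You never need the fact that a connected matroid of rank $\ge 3$ has a circuit of size $\ge 4$, which the paper draws from connectivity without argument.
\item You make explicit that $U_{1,2}^n$ restricted to the witness circuit can be $U_{1,2}$, $U_{1,1}$ or zero; the paper's rank-$>2$ case glosses over this.
\end{itemize}
The paper's route has its case split aligned with the hypotheses of the statement, so the ``otherwise'' direction needs no converse lemma.

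Your ``main obstacle'' is easier than you expect. Suppose every circuit of size $\ge 3$ is a triangle through $\{1,2\}$. A nonloop $j\notin\{1,2,i\}$ not parallel to $i$ lies with $i$ in a circuit of size $\ge 3$. That circuit would then have to contain the four distinct elements $1,2,i,j$, which is impossible. So:
\begin{itemize}
\item no circuit elimination is needed;
\item the case of $j$ parallel to $1$ or $2$ is covered automatically;
\item rank $2$ follows at once, since every element lies in the closure of $\{1,2,i\}$, so your separate claim about connected matroids of rank $\ge 3$ can be dropped.
\end{itemize}
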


\begin{proof}
We prove the theorem in the cases $1,2\in N_n$ are parallel; at least one of 1,2 is a loop or $\mathbf{r}_M(1,2)=2$. 
\begin{enumerate}
\item If $\{1,2\}\in N_n$ are parallel of $M$, that is, $\mathbf{r}_M(1)=\mathbf{r}_M(2)=\mathbf{r}_M(1,2)=1$. Then, by Lemma \ref{connected}, a matroid $M_1=M\backslash\{1\}$ is an extreme ray of $\Gamma_{n-1}$. For $\mathbf{h}=(a,b)\in F$, $\mathbf{h}=a\mathbf{r}_M+b\mathbf{r}_{U_{1,2}^n}$. Restricting $\mathbf{h}$ on $N_n\backslash \{1\}$, we obtain $\mathbf{h}_{M_1}=a\mathbf{r}_1+b\mathbf{r}_2$, where $\mathbf{r}_i$, $i=1,2$ are the rank functions of a matroid $M_1$ and $U_{1,1}^{n-1}$ on $M_1$, respectively. Thus, by Theorem \ref{u11cy}, if $\mathbf{h}$ is entropic, $a=\log v$ for positive integer $v$, that is, $F$ is Chen-Yeung-type.
\item If at least one of $\{1,2\}\in N_n$ is a loop of $M$, that is, $\mathbf{r}_M(1)=0$ or $\mathbf{r}_M(2)=0$. WLOG, assume $\mathbf{r}_M(1)=0$. Then, by Lemma \ref{connected}, there must be a matroid $M_2=M\backslash\{1\}$. For $\mathbf{h}=(a,b)\in F$, $\mathbf{h}=a\mathbf{r}_M+b\mathbf{r}_{U_{1,2}^n}$. Restricting $\mathbf{h}$ on $
N_n\backslash \{1\}$, we obtain $\mathbf{h}_{M_2}=a\mathbf{r}_1+b\mathbf{r}_2$, where $\mathbf{r}_i$, $i=1,2$ are the rank functions of a matroid $M_2$ and $U_{1,1}^{n-1}$ on $N_n\backslash\{1\}$, respectively. Thus, by Theorem \ref{u11cy}, if $\mathbf{h}$ is entropic, $a=\log v$ for positive integer $v$, that is, $F$ is Chen-Yeung-type.
\item If $\{1,2\}\in N_n$ are not parallels or loops of $M$, that is, $\mathbf{r}_M(1,2)=2$. Then, by Lemma \ref{connected}, there must be a cycle $C$ containing $\{1,2\}$ with $|C|=k\geq 3$.
\begin{enumerate}
\item When $\mathbf{r}_M(N_n)>2$, by Lemma \ref{connected}, there must be a cycle $C_1$ with $|C_1|\geq 4$. For $\mathbf{h}=(a,b)\in F$, $\mathbf{h}=a\mathbf{r}_M+b\mathbf{r}_{U_{1,2}^n}$. Restricting $\mathbf{h}$ on $C_1$, we obtain $\mathbf{h}_{C_1}=a\mathbf{r}_1+b\mathbf{r}_2$, where $\mathbf{r}_i$, $i=1,2$ are the rank functions of $U_{k-1,k}$ and $U_{1,2}^{n-1}$ on $C_1$, respectively. Thus, by Lemma \ref{u12}, if $\mathbf{h}$ is entropic, $a=\log v$ for positive integer $v$. 
\item When $\mathbf{r}_M(N_n)=2$, $|C|=3$. We assume the cycle $C=\{1,2,i\}$, $i\in N_n\backslash\{1,2\}$.
\begin{enumerate}
\item If $\forall j\in N_n\backslash\{1,2,i\}$ is either parallel with $\{i\}$ or a loop. For $\mathbf{h}=(a,b)\in F$, $\mathbf{h}=a\mathbf{r}_{M}+b\mathbf{r}_{U_{1,2}^n}$, and we can check that $\mathbf{h}$ is the rank function of $(U_{2,3},U_{1,2})$. Thus, by \cite{matus2005piecewise}, $F$ is Mat\'{u}\v{s}-type.
\item If there exists $j\in N_n\backslash\{1,2,i\}$ not parallel with $\{i\}$ or a loop. Then $\{j\}$ is parallel with \{1\} or \{2\}( WLOG, we assume $\{j\}$ and $\{1\}$ are parallels.) or $\{1,2,i,j\}$ is $U_{2,4}$. For $\mathbf{h}=(a,b)\in F$, $\mathbf{h}=a\mathbf{r}_{M}+b\mathbf{r}_{U_{1,2}^n}$, Restricting $\mathbf{h}$ on $\{2,i,j\}$, we obtain $\mathbf{h}'=a\mathbf{r}_1+b\mathbf{r}_2$, where $\mathbf{r}_1$ and $\mathbf{r}_2$ are the rank functions of a matroid $U_{2,3}$ and $U_{1,1}^{3}$ on $\{2,i,j\}$, respectively. Thus, by Lemma \ref{u11}, if $\mathbf{h}$ is entropic, $a=\log v$ for positive integer $v$, that is, $F$ is Chen-Yeung-type.
\end{enumerate}
\end{enumerate}
\end{enumerate}
\end{proof}
\addtolength{\topmargin}{0.1in}
\subsection{All $(M,U_{1,n'}^n)$ are in one of the four types}\label{3c}
\begin{lemma}\textup{(Generalization of \cite[Lemma 3]{liu2023entropy})}\label{component}
	For a random vector $X_i,i\in N_{n-1}$, consider the (n-1)-partite graph $G=(V,E)$ with $V=\bigcup\limits_{i \in N_{n-1}}\mathcal{X}_i$ and $(x_i,x_j)\in E$ if and only if $p(x_i,x_j)>0$, $i,j\in N_{n-1}$, $i\neq j$. If $(X_i,i\in N_{n-1})$ satisfies the following infomation equalities,
	\begin{align}
		H(X_{i\cup K})+&H(X_{j\cup K})=H(X_K)+H(X_{ij\cup K}),\nonumber\\
		&i,j\in N_n,K \subseteq N_n\backslash\{i,j\} \nonumber
	\end{align}
	Each connected component of $G$ is a complete (n-1)-partite graph. Futhermore,
	if $p(x_1)= p(x_2)=...= p(x_{n-1})$ holds for any $p(x_{12...n-1}) > 0$, then the number of vertices in $\mathcal{X}_i, i=1,2,...,n-1$ are the same and the
	the probability mass of all of the vertices, the edges and the triangles are equal, respectively, in each connected component.
\end{lemma}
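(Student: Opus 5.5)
The plan is to follow the proof of \cite[Lemma 3]{liu2023entropy} and argue on the support of the distribution. We first turn the information equalities into multiplicative identities for the probability mass function, as in the proofs of Lemma \ref{u11} and Lemma \ref{u12}. The equality $H(X_{i\cup K})+H(X_{j\cup K})=H(X_K)+H(X_{ij\cup K})$ says $I(X_i;X_j|X_K)=0$, so
\begin{align}
p(x_ix_jx_K)p(x_K)=p(x_ix_K)p(x_jx_K)\nonumber
\end{align}
for all values. In particular:
\begin{itemize}
\item with $K=\varnothing$, $X_i$ and $X_j$ are pairwise independent;
\item for any $K$ and any values with $p(x_ix_K)>0$ and $p(x_jx_K)>0$, we get $p(x_ix_jx_K)>0$.
\end{itemize}

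\textbf{Step 1: each component is complete multipartite.} Consider a connected component $D$ of $G$. Since $X_i,X_j$ are independent, if $x_i$ has positive mass and $x_j$ has positive mass, then $p(x_i,x_j)>0$. So every pair of vertices with positive mass in different parts is adjacent. This already makes $D$ a complete $(n-1)$-partite graph, once vertices of zero mass are discarded (we assume $\mathcal{X}_i$ is the support).

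The stronger statement we need later is the joint-support version: for any choice of one vertex from each part of $D$, the tuple has positive joint probability. We prove this by induction on $|K|$, using the positivity consequence above. Given $p(x_ix_K)>0$ and $p(x_jx_K)>0$, we obtain $p(x_ix_jx_K)>0$, which grows the positive-mass tuples one coordinate at a time.

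\textbf{Step 2: equal masses within a component.} Assume $p(x_1)=\cdots=p(x_{n-1})$ whenever $p(x_{1\cdots n-1})>0$. By Step 1, any choice $x_1\in\mathcal{X}_1\cap D,\dots,x_{n-1}\in\mathcal{X}_{n-1}\cap D$ has positive joint mass. Hence $p(x_1)=p(x_2)$ for all such $x_1,x_2$. Fixing $x_2$ and varying $x_1$ shows all vertices of $\mathcal{X}_1\cap D$ have the same mass. The same holds in every part, and by the hypothesis the common value is the same across parts, say $q$.

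To get equal part sizes, we sum over the component. Because components are unions of complete multipartite blocks and the $X_i$ are pairwise independent, the total mass of the component is
\begin{align}
P(D)=\sum_{x_1\in D}p(x_1)=|\mathcal{X}_1\cap D|\,q,\nonumber
\end{align}
and the same computation for part $i$ gives $P(D)=|\mathcal{X}_i\cap D|\,q$. Therefore $|\mathcal{X}_i\cap D|$ does not depend on $i$.

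Edge masses then follow from independence: $p(x_i,x_j)=p(x_i)p(x_j)=q^2$ for every edge in $D$. For triangles, apply the equality with $K=\{k\}$:
\begin{align}
p(x_ix_jx_k)=\frac{p(x_ix_k)p(x_jx_k)}{p(x_k)}=\frac{q^4}{q}=q^3.\nonumber
\end{align}
So all triangles in $D$ also have equal mass.

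\textbf{Main obstacle.} The delicate point is the statement that each component is complete multipartite with positive \emph{joint} support, rather than merely pairwise positive. Pairwise independence forces the whole graph to be one complete multipartite component, which seems at odds with having several components. So the hypotheses must be read as intended in \cite{liu2023entropy}, where independence holds only conditionally on the component. The plan is therefore to use the conditional equalities with nonempty $K$ to propagate positivity along paths within a component. If an edge $x_ix_j$ and an edge $x_jx_k$ exist, take $K=\{j\}$ to obtain $p(x_ix_jx_k)>0$, and hence the edge $x_ix_k$; iterating along paths yields completeness.

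I expect that reconciling the exact index range of the equalities ($N_n$ versus $N_{n-1}$) with this component-wise argument will be the hardest step.
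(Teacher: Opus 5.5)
The paper states this lemma without proof (it only points to \cite[Lemma 3]{liu2023entropy}), so your plan has to stand on its own. It has a genuine gap in Step 2. You are right that the hypothesis must be read with $K\neq\varnothing$ and with $i,j,K$ inside $N_{n-1}$: if $K=\varnothing$ is allowed, the $X_i$ are mutually independent, $G$ has a single component, and the lemma is vacuous. More importantly, the paper invokes the lemma exactly where pairwise independence fails. On $(U_{n-1,n},U_{1,n-1}^n)$ one has $I(X_i;X_j)=b>0$ for $i,j\in N_{n-1}$, and likewise $I(X_2;X_3)=b$ in Lemma \ref{M1}. Your last paragraph redoes Step 1 under the correct reading: triangle closure via $K=\{j\}$, followed by the joint-support induction, which is fine. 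Step 2, however, still uses $p(x_i,x_j)=p(x_i)p(x_j)=q^2$, and your triangle mass $q^4/q$ is computed from it. Under the intended hypothesis that identity is false. So equality of edge masses inside a component, and hence of triangle masses, is not proved.

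The repair uses only conditioning on a single variable. Take a triangle $x_i,x_j,x_k$ in a component $D$, where all vertices have mass $q$ and all pairs are positive. The three equalities $I(X_i;X_j|X_k)=I(X_i;X_k|X_j)=I(X_j;X_k|X_i)=0$ give
\[ p(x_ix_jx_k)\,q=p(x_ix_k)p(x_jx_k)=p(x_ix_j)p(x_jx_k)=p(x_ix_j)p(x_ix_k). \]
Cancelling positive factors shows that the three edges of every triangle have equal mass. Any two edges of $D$ are joined by a chain of triangles in $D$, consecutive ones sharing an edge; for $x_ix_j$ and $x_i'x_j'$, use $x_ix_jx_k$, $x_ix_j'x_k$, $x_i'x_j'x_k$. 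Hence all edges have a common mass $e$ and every triangle has mass $e^2/q$. Iterating $p(x_ix_jx_K)=p(x_ix_K)p(x_jx_K)/p(x_K)$ then gives equal masses for all full $(n-1)$-tuples, which is what Lemma \ref{u1n-1} actually uses.

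Two smaller fixes are also needed. First, the part-size count should not be justified by independence. Instead, note that a positive-probability tuple has pairwise adjacent coordinates, so it lies in one component and $P(X_1\in D)=P(X_i\in D)$ for every $i$. Second, in Step 1 paths may return to the same part ($x_i\sim x_j\sim x_i'$). Handle this by taking $x_k\sim x_i$ in a third part and applying triangle closure repeatedly, which shows that $x_i$ and $x_i'$ have the same neighbours outside part $i$. This argument needs $n-1\geq 3$: with two parts there is no nonempty $K$ and the claim fails. That is consistent with $(U_{2,3},U_{1,2}^3)$ being Mat\'{u}\v{s}-type, so Lemma \ref{u1n-1} really needs $n\geq 4$.
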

\begin{lemma}\label{u1n-1}
For any $U_{n-1,n}$ with $n\geq3$, $F=(U_{n-1,n},U_{1,n-1}^n)$, $\mathbf{h}=(a,b)\in F$ is entropic if and only if $a=\log v$ for positive integer $v$, that is, $F$ is Chen-Yeung-type.
\end{lemma}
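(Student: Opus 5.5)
We follow the pattern of Lemmas \ref{u11} and \ref{u12}. The difference is that the single $n$-th coordinate now plays the role of the ``free'' element, and Lemma \ref{component} replaces Lemma \ref{uniform}. For $\mathbf{h}=(a,b)=a\mathbf{r}_{U_{n-1,n}}+b\mathbf{r}_{U_{1,n-1}^n}$ we have $\mathbf{h}(n)=a$, since $n$ is a loop of $U_{1,n-1}^n$. Also $\mathbf{h}(\{i\})=a+b$ and $\mathbf{h}(N_n)=(n-1)a+b$. So it suffices to show that $X_n$ is uniform on its support.

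\textbf{Sufficiency.} This is as before. The ray $U_{1,n-1}^n$ is entropic for every $b>0$: take $X_1=\dots=X_{n-1}$ and $X_n$ constant. The point $\log v\cdot\mathbf{r}_{U_{n-1,n}}$ is entropic for every integer $v\ge 2$: take $X_1,\dots,X_{n-1}$ i.i.d.\ uniform on $\mathbb{Z}_v$ and $X_n=\sum X_i$. Lemma \ref{h1h2} then gives all $(\log v,b)$.

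\textbf{Necessity.} Let $(X_i,i\in N_n)$ realize $\mathbf{h}$.

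First, we list the facets of $\Gamma_n$ containing both extreme rays. These are $F(i)$ for every $i\in N_n$, because both matroids are tight when $n-1\ge 2$. They also include every $F(i;j|K)$ with $|K|\le n-3$ for which $\{i,j\}\cup K$ meets $\{n\}$, or for which $K\cap N_{n-1}=\emptyset$ fails in a way that keeps $U_{1,n-1}^n$ modular on the relevant sets. We translate these equalities, as in \eqref{eq:nn}--\eqref{eq:ijk}, into identities on the pmf on the support.

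Second, we extract three consequences:
\begin{itemize}
\item $X_n$ is independent of each single $X_i$, $i\le n-1$ (take $K=\emptyset$ and the pair $\{i,n\}$);
\item $X_n$ is a function of $X_{N_n\setminus\{n,i\}}\cup\{X_i\}$, and symmetrically each $X_i$ is determined by the remaining ones (from the $F(i)$);
\item the conditional relations with $K\ni n$ give, for the variables $Y_i=(X_i,X_n)$ or directly for the $X_i$ restricted to a fixed value of $X_n$, the family of equalities required in Lemma \ref{component}.
\end{itemize}
By Lemma \ref{component}, each connected component of the graph is a complete multipartite graph.

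Third, we use the functional dependences to obtain $p(x_1)=\dots=p(x_{n-1})$, and also equality with $p(x_n)$, on every positive atom. The argument is the same as the ``routine calculation'' giving \eqref{eq:u11.4}--\eqref{eq:u11.7}. One compares $p(x_ix_n)=p(x_i)p(x_n)$ with $p(x_ix_n)=p(x_ix_j)$, which comes from the determinacy of each coordinate by the others at the pair level. Lemma \ref{component} then yields equal alphabet sizes and equal masses inside each component. Combined with the independence of $X_n$ from $X_1$, this forces $X_n$ to be uniform on $v$ values, so $a=H(X_n)=\log v$.

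\textbf{Main obstacle.} The hard step is the third one: showing that the equality of marginals holds globally rather than only within a component. The statement also includes $n=3$, where very few $F(i;j|K)$ are available ($|K|\le 0$ only). For $n=3$ the face is $(U_{2,3},U_{1,2}^3)$, so we must check which pairwise independences actually survive there.

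For general $n\ge 4$ we would first try to reduce to an induced $U_{3,4}$-type minor carrying $U_{1,3}$. We would use that minor to produce a pair $X_i,X_n$ that is independent with $p(x_i)=p(x_n)$ on the support. Lemma \ref{uniform} then finishes the argument directly, exactly as in Lemma \ref{u12}.

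For $n=3$ this reduction does not exist. We would have to verify the pmf identities by hand from the three tightness equalities and the unconditional independences. This case is where we expect the argument to be most delicate, and possibly where the stated range must be scrutinized against the Matúš face of Example 1. Example 1 calls $(U_{2,3},U_{1,2}^3)$ Matúš-type, while the statement claims Chen--Yeung-type for $n=3$. Resolving this apparent conflict is an open point in this plan, not something the plan settles.
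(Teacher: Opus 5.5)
\textbf{The proposal has two gaps in the necessity direction; the first concerns the statement itself.} Your sufficiency direction is fine. You are right to distrust $n=3$, and the issue should be settled with a counterexample rather than left open. For $n=3$ the face is Mat\'{u}\v{s}'s $(U_{2,3},U_{1,2}^3)$, and the statement is false there. Take $X_1$ a uniform bit, $X_3$ an independent bit with $P(X_3=1)=1/4$, and $X_2=X_1\oplus X_3$. Its entropy function is $a\mathbf{r}_{U_{2,3}}+b\mathbf{r}_{U_{1,2}^3}$ with $a=H(X_3)\in(0,\log 2)$ and $b=\log 2-a>0$, so $a$ is not the logarithm of an integer.

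The paper's proof does not treat this case separately. It correctly derives $p(x_{123})=p(x_1)p(x_3)$ on the support. It then uses Lemma \ref{component} to equate the masses of atoms through a common $x_1$. In this example the graph on $\mathcal{X}_1\cup\mathcal{X}_2$ is a single $K_{2,2}$ whose edges carry masses $1/8$ and $3/8$, so that step fails. The lemma can only hold for $n\ge 4$.

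\textbf{For $n\ge 4$, your third step fails as written.} Equality $p(x_i)=p(x_n)$ on every positive atom would give $H(X_i)=H(X_n)$, i.e.\ $a+b=a$, which is false for $b>0$. The same objection rules out your fallback of feeding a pair $(X_i,X_n)$ into Lemma \ref{uniform}; this is also why the argument of Lemma \ref{u12} does not reach $k=n-1$. Moreover, for $n\ge 5$ the restriction of $U_{n-1,n}$ to four points is $U_{4,4}$, not $U_{3,4}$, so the minor you want does not exist. Your third bullet is misdirected too: $I(X_i;X_j|X_n)=b\neq 0$ for $i,j<n$. The useful relations are $I(X_i;X_j|X_K)=0$ with $1\in K$ and $|K|\le n-3$.

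\textbf{The missing idea, which is the paper's,} is the identity $p(x_{N_n})=p(x_1)p(x_n)^{n-2}$ on the support, together with equality of the masses of all atoms through a common $x_1$. Independence of $X_1$ and $X_n$ then gives, for every $x_n'$, an atom through the same $x_1$ with mass $p(x_1)p(x_n')^{n-2}$. Hence $p(x_n')=p(x_n)$. No cross-component equality is needed, so the obstacle you describe does not arise.

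Both facts can be checked directly:
\begin{enumerate}
\item Given $X_1$, any $n-2$ of $X_2,\dots,X_n$ are conditionally mutually independent. With tightness and $X_1\perp X_n$, this gives $p(x_k|x_1)=p(x_n)$ on the support for every $k$. It also makes the conditional support of $(X_2,\dots,X_{n-1})$ given $x_1$ a product set.
\item For two atoms $(x_1,x_2,\dots,x_{n-1})$ and $(x_1,x_2',\dots,x_{n-1}')$, the hybrid $(x_1,x_2,x_3',\dots,x_{n-1}')$ is again an atom. It shares $x_2$ with the first and $x_3'$ with the second, so all three carry the same value of $p(x_n)$.
\end{enumerate}
The second step needs the two factors $X_2,X_3$, i.e.\ $n\ge 4$. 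That is exactly what breaks in the example above.
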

\begin{proof}
If $\mathbf{h}\in F$ is entropic, its characterizing random vector($X_i,i\in N_n$) satisfies the following infomation equalities,
\begin{align}
H(X_{N_n})&=H(X_{N_{n}-i}),i\in N_n \label{eq:u1n-1.1} \\
H(X_{in})&=H(X_i)+H(X_n),i\neq n \label{eq:u1n-1.2} \\
H(X_{i\cup K})+H(X_{j\cup K})&=H(X_K)+H(X_{ij\cup K}).\label{eq:u1n-1.3}\nonumber \\ 
i,j\in N_n, K\subseteq N_n&\backslash\{i,j\},K\neq\{n\},|K|<n-2
\end{align}
Routine calculation leads to 
\begin{align}
H(X_{i_1j_1})=H(X_{i_2j_2}).\qquad \{i_1,j_1\}\neq\{i_2,j_2\}\label{eq:u1n-1.4}
\end{align}
According to the equalitiy \eqref{eq:u1n-1.1}-\eqref{eq:u1n-1.4}, routine calculation leads to
\begin{align}
H(X_{N_n})&=H(X_{N_n-1})=H(X_{N_n-\{12\}})+2H({X_1})=...\nonumber \\
&=H(X_1)+(n-2)H(X_n).\label{eq:u1n-1.5}
\end{align}
For $(x_i,i\in N_n)\in \mathcal{X}_{N_n}$, with $p(x_{N_n})>0$, above information equalities imply the probability mass function satisfies
\begin{align}
p(x_{N_n})=p(x_1)p(x_n)^{n-2}.\label{eq:u1n-1.6}
\end{align}
By equalities \eqref{eq:u1n-1.2},$X_1$ and $X_n$ are independent. For any $x'_n\in\mathcal{X}_n$ with $x'_n\neq x_n$,
\begin{align}
p(x_1,x'_n)=p(x_1)p(x'_n)>0.\label{eq:u1n-1.7}
\end{align}
As
\begin{align}
p(x_1,x'_n)=\sum_{x_2,x_3...x_{n-1}}p(x_1,x_2,x_3...x'_n),\label{eq:u1n-1.8}
\end{align}
there exists $x'_2\in \mathcal{X}_2$, $x'_3\in\mathcal{X}_3$, ..., $x'_{n-1}\in\mathcal{X}_{n-1}$ such that $p(x_1,x'_2,x'_3...x'_n)>0$. By the same argument, we have
\begin{align}
p(x_1,x'_2,x'_3...x'_n)=p(x_1)p(x'_n)^{n-2}.\label{eq:u1n-1.9}
\end{align}
For the $(n-1)$-partite graph $G=(V,E)$ with $V=\mathcal{X}_1\cup\mathcal{X}_2\cup\mathcal{X}_3...\cup\mathcal{X}_{n-1}$ and $(x_i,x_j)\in E$ if and only if $p(x_i,x_j)>0$, $i,j\in N_{n-1}$, $i\neq j$. By Lemma \ref{component}, the probablity mass of the $(n-1)$-side polygon are equal in each connected component. Thus, we can get $p(x_1,x_2,x_3,...x_n)=p(x_1,x'_2,x'_3,...x'_n)$. Combining equality\eqref{eq:u1n-1.6} and equality\eqref{eq:u1n-1.9}, we can get 
\begin{align}
p(x_n)=p(x'_n).\label{eq:u1n-1.10}
\end{align}
For any $x'_n\in \mathcal{X}_n$, which implies that $X_n$ is uniformly distributed on $\mathcal{X}_n$, and so $H(X_n)=\log v$ and $v=|\mathcal{X}_n|$.

The "only if" part can immediately implied by Lemma \ref{h1h2} and the fact that $a = \log v$ on the ray $U_{n-1,n}$, and whole ray $U_{1,n-1}^n$ are entropic.
\end{proof}
\begin{lemma}\textup{\cite[Lemma 6]{csirmaz2025exploring}\cite[Lemma 15.3]{yeung2008information}}\label{all}
For any extreme ray $P$, the 2-dimensional face $(P,U_{1,n'}^n)$ is all-entropic if
 \begin{enumerate}
 	\item $\mathbf{r}(P)=1$, or 
 	\item $\mathbf{r}(P)>1$ but each element that is a loop of $U_{1,n'}^n$ is also a loop of $P$.
 \end{enumerate}
\end{lemma}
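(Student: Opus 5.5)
We have to show that every point $(a,b)$ with $a,b>0$ is entropic. The plan is to exhibit explicit random variables for the two rays and then combine them by Lemma \ref{h1h2}. Since $\mathbf{h}=a\mathbf{r}_P+b\mathbf{r}_{U_{1,n'}^n}$ and $\Gamma_n^*$ is closed under addition, it suffices to prove two things:

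\begin{enumerate}
\item $b\,\mathbf{r}_{U_{1,n'}^n}$ is entropic for every $b>0$.
\item $a\,\mathbf{r}_P$ is entropic for every $a>0$ under either hypothesis, or can be absorbed together with $U_{1,n'}^n$.
\end{enumerate}

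\textbf{The rank-one ray.} For the first claim, take a single random variable $Y$ with $H(Y)=b$; any $b>0$ is achievable, for instance by a suitable Bernoulli or finite distribution. Set $X_i=Y$ for $i\in\alpha$ and $X_i$ constant for $i\in L$. Then $H(X_A)=b$ exactly when $A\cap\alpha\neq\varnothing$, and $0$ otherwise. This gives the whole ray of $U_{1,n'}^n$, and more generally the whole ray of any rank-one matroid $U_{1,m}^{\beta,n}$.

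\textbf{Case 1, $\mathbf{r}(P)=1$.} Here $P$ is a rank-one extreme polymatroid. By the definition of minimality and integrality, the value $\mathbf{r}_P(A)$ is either $0$ or $1$. Monotonicity and submodularity then force $\mathbf{r}_P(A)=1$ iff $A$ meets the set of nonloops of $P$. Hence $P=U_{1,m}^{\beta,n}$ for some $\beta$, and its whole ray is entropic by the same construction as above. Adding the two independent constructions via Lemma \ref{h1h2} gives every $(a,b)$.

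\textbf{Case 2, $\mathbf{r}(P)>1$ and every loop of $U_{1,n'}^n$ (every element of $L=N_n\setminus\alpha$) is a loop of $P$.} The ray of $P$ itself need not be fully entropic, since only $\log v$ multiples with $v\in\chi_P$ may be. So the argument must use $b>0$ essentially. This is the step I expect to be the main obstacle, and I would follow the construction behind \cite[Lemma 6]{csirmaz2025exploring}.

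Fix $(a,b)$ and choose an integer $v\in\chi_P$ with $\log v\ge a$; the set $\chi_P$ is nonempty and unbounded, being closed under products by Lemma \ref{h1h2}-type tensoring. Let $(Z_i)$ realize $\log v\cdot\mathbf{r}_P$, and let $W$ be an independent random variable. The aim is to "mix": let $X_i=(Z_i,W)$ for $i\in\alpha$ and $X_i$ constant for $i\in L$. This is consistent because $L$ consists of loops of $P$, so $Z_i$ is already constant there. The joint entropy is then $H(Z_A)+H(W)\mathbf{1}[A\cap\alpha\neq\varnothing]$.

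This realizes $(\log v, H(W))$. To reach arbitrary $a<\log v$, replace the fixed realization by a randomized one: with an auxiliary switch variable $S$ included in $W$, reveal $Z$ only with some probability. Concretely, set $X_i=(S,\ S\cdot Z_i)$ for $i\in\alpha$, with $S\sim\mathrm{Bernoulli}(\theta)$ independent of $Z$. Then $H(X_A)=h(\theta)+\theta\log v\,\mathbf{r}_P(A)$ for every $A$ meeting $\alpha$, and $H(X_A)=0$ otherwise. The key point is that the dependence on $A$ is exactly an affine combination of $\mathbf{r}_P$ and $\mathbf{r}_{U_{1,n'}^n}$. Here it is essential that every nonloop of $P$ lies in $\alpha$, so that $\mathbf{r}_P(A)>0$ implies $A\cap\alpha\neq\varnothing$.

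This produces the points $(\theta\log v,\ h(\theta))$ for $\theta\in(0,1]$. Adding an independent copy of the $U_{1,n'}^n$ construction increases $b$ freely. It therefore remains to check that for given $a$ we can pick $v$ and $\theta=a/\log v$ with $h(\theta)\le b$. Taking $v$ large makes $\theta\to 0$ and hence $h(\theta)\to0$, which settles this. Summing with Lemma \ref{h1h2} then gives every $(a,b)$ with $a,b>0$, so the face is all-entropic.
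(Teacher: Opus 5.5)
\textbf{The gap: the claim that $\chi_P$ is nonempty.} Your step ``the set $\chi_P$ is nonempty'' does not follow from the hypotheses. Closure of $\chi_P$ under products via Lemma~\ref{h1h2} is fine once $\chi_P$ is nonempty. But nothing in conditions 1--2 guarantees that some multiple $\log v\cdot\mathbf{r}_P$ is entropic.

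This cannot be repaired, because the statement is false for non-entropic $P$. Take the V\'{a}mos matroid $V_8$ on $\{a,a',b,b',c,c',d,d'\}$, which is connected, of rank $4$, and has no loops. Take $U_{1,8}^8$, which has no loops either, so condition~2 holds vacuously.

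\begin{itemize}
\item $(V_8,U_{1,8}^8)$ is a genuine $2$-dimensional face. Tightness, together with the equalities $I(i;j|K)=0$ for $K\neq\varnothing$ that $V_8$ satisfies, forces $h=(x-m)\mathbf{r}_{V_8}+m\,\mathbf{r}_{U_{1,8}^8}$, where $x=h(i)$ and $m=I(i;j)$.
\item Group the elements as $A=aa'$, $B=bb'$, $C=cc'$, $D=dd'$. The Zhang--Yeung defect $2I(A;B)-I(C;D)-I(C;AB)-3I(A;B|C)-I(A;B|D)$ equals $2-0-1-0-0=1$ on $\mathbf{r}_{V_8}$.
\item The same defect equals $2-1-1-0-0=0$ on $\mathbf{r}_{U_{1,8}^8}$.
\item Hence the defect equals $a>0$ at every point $(a,b)$ of the face, so no such point is even almost entropic.
\end{itemize}

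So the lemma needs the extra hypothesis that $P$ is entropic, i.e.\ that some positive multiple of $\mathbf{r}_P$ lies in $\Gamma_n^*$. The paper tacitly imposes this: the proof of Theorem~\ref{11} discards non-entropic $M$ before invoking the lemma. You should state this hypothesis explicitly rather than assert $\chi_P\neq\varnothing$.

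\textbf{With that hypothesis, your argument is correct.} It is also self-contained, whereas the paper offers nothing beyond the citations.
\begin{itemize}
\item Case~1 is fine.
\item In Case~2, the switch construction $X_i=(S,S\cdot Z_i)$ does realize $\theta\log v\,\mathbf{r}_P+h(\theta)\,\mathbf{r}_{U_{1,n'}^n}$. This works precisely because the loops of $U_{1,n'}^n$ are loops of $P$.
\item Letting $v\to\infty$ inside the product-closed set $\chi_P$ gives $h(a/\log v)\le b$, and Lemma~\ref{h1h2} finishes the proof.
\end{itemize}
For an extreme ray that is not a matroid, replace $\log v$ by any $c>0$ such that $c\,\mathbf{r}_P$ is entropic; nothing else in the argument changes.
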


We now define a matroid denoted by \( M_1^{123,145} \). This matroid is on a ground set of five elements, such that the restriction to the subset \(\{1,2,3\}\) is \( U_{2,3} \), the restriction to \(\{1,4,5\}\) is also \( U_{2,3} \), and the restriction to \(\{2,3,4,5\}\) is \( U_{3,4} \).

\begin{lemma}\label{M1}
For $F=(M_1^{123,145},U_{1,4}^{2345,5})$, $\mathbf{h}=(a,b)\in F$ is entropic if and only if $a=\log v$ for positive integer v, that is, $F$ is Chen-Yeung-type.
\end{lemma}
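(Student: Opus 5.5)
The plan is to follow the scheme of Lemma \ref{u11} and Lemma \ref{u1n-1}: list the Shannon equalities that hold on $F$, translate them into identities for the probability mass function on its support, and deduce that $X_1$ is uniform. Since $\mathbf{h}(1)=a\mathbf{r}_M(1)=a$, this gives $a=\log v$. For the converse, $(\log v)\mathbf{r}_M$ is entropic for suitable $v$ (take $v\in\chi_{M_1^{123,145}}$), and the whole ray $U_{1,4}^{2345,5}$ is entropic, so Lemma \ref{h1h2} gives the inner bound.

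Write $\mathbf{h}=a\mathbf{r}_M+b\mathbf{r}_U$, where $\mathbf{r}_U(A)=1$ exactly when $A\cap\{2,3,4,5\}\neq\emptyset$. The lower-bound reductions of the earlier sections do not apply here. On $\{1,2,3\}$ the face restricts to the Mat\'{u}\v{s}-type face $(U_{2,3},U_{1,2}^3)$, which does not force $a=\log v$. On $\{2,3,4,5\}$ it restricts to $(U_{3,4},U_{1,4}^4)$, which is all-entropic by Lemma \ref{all}. So element $1$ must be used together with both triangles $\{1,2,3\}$ and $\{1,4,5\}$.

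The key equalities, read off from $\mathbf{r}_M$ of rank $3$, are as follows.
\begin{itemize}
\item Pairwise independence: $\mathbf{h}(1i)=\mathbf{h}(1)+\mathbf{h}(i)$ for $i\in\{2,3,4,5\}$, so $X_1$ is independent of each such $X_i$.
\item Functional dependence within each triangle: $\mathbf{h}(12)=\mathbf{h}(13)=\mathbf{h}(23)=\mathbf{h}(123)=2a+b$, so any two of $X_1,X_2,X_3$ determine the third; the same holds for $X_1,X_4,X_5$.
\item Cross independence: $\mathbf{h}(124)=3a+b=\mathbf{h}(1)+\mathbf{h}(24)$, and similarly for the other cross pairs, so $X_1$ is independent of $(X_i,X_j)$ for $i\in\{2,3\}$, $j\in\{4,5\}$.
\item Global dependence: $\mathbf{h}(N_5)=\mathbf{h}(2345)=\mathbf{h}(1245)$, etc.
\end{itemize}
On the support these give
\[
p(x_1x_2x_3)=p(x_1x_2)=p(x_1x_3)=p(x_2x_3),\qquad p(x_1x_2)=p(x_1)p(x_2),
\]
and hence $p(x_2)=p(x_3)$ whenever $p(x_{123})>0$. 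Likewise $p(x_4)=p(x_5)$, and $p(x_1x_2x_4)=p(x_1)p(x_2x_4)$.

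The main step, which I expect to be the obstacle, is to show $p(x_1)=p(x_1')$ for all $x_1,x_1'\in\mathcal{X}_1$. I would first try the argument of Lemma \ref{u1n-1}.
\begin{enumerate}
\item Express $p(x_{N_5})$ on the support as a product of marginals. Because $(x_2,x_4)$ determines $x_1$ via the cross independence and the triangle dependences, and so determines everything, I expect a formula of the shape $p(x_{N_5})=p(x_1)\,p(x_2x_4)$. This should further reduce, via the triangle identities, to $p(x_{N_5})=p(x_1)p(x_2)p(x_4)/c$, where $c$ is a factor tied to the common part of $X_2,\dots,X_5$.
\item Fix $x_2,x_4$ and vary $x_1$. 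Independence of $X_1$ and $(X_2,X_4)$ gives $p(x_1'x_2x_4)>0$ for every $x_1'$, so there is a completion $(x_1',x_2,x_3',x_4,x_5')$ in the support.
\item Compare the two resulting expressions. The issue is that $p(x_3)$ and $p(x_5)$ could a priori change with $x_1$. Here I would use Lemma \ref{component} on the bipartite graphs for $(X_2,X_3)$ and for $(X_4,X_5)$, conditioned on the common component determined by $X_2$. Within a component all vertex masses agree. Hence $p(x_3')=p(x_2)=p(x_3)$, and similarly for $x_5'$.
\item With these equal, the two expressions give $p(x_1')=p(x_1)$, so $X_1$ is uniform and $a=\mathbf{h}(1)=\log|\mathcal{X}_1|$.
\end{enumerate}
If step 3 fails directly, the fallback is to pass to the ratio $p(x_{123})/p(x_2)=p(x_1)$ and use that, for fixed $x_2$, the map $x_1\mapsto x_3$ is injective with $p(x_3)=p(x_2)$. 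Summing $p(x_2x_3)=p(x_1)p(x_2)$ over $x_3$ in the component of $x_2$ should then force $p(x_1)$ to equal $1/|\mathcal{X}_1|$.
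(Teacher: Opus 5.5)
Your overall scheme (turn the equalities on $F$ into pointwise identities on the support, vary $x_1$ with the other coordinates fixed, conclude that $X_1$ is uniform) matches the paper's. However, the decisive identity is missing, so the main step does not go through.

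The only factorization you actually derive is $p(x_{N_5})=p(x_1)\,p(x_2x_4)$, which is just $X_1\perp(X_2,X_4)$ combined with $\mathbf{h}(124)=\mathbf{h}(N_5)$. At the completion through $x_1'$ it reads $p(x'_{N_5})=p(x_1')\,p(x_2x_4)$. Comparing the two tells you nothing about $p(x_1)$ versus $p(x_1')$ unless you already know $p(x_{N_5})=p(x'_{N_5})$, which is the whole content.

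Proving $p(x_3')=p(x_3)$ and $p(x_5')=p(x_5)$ in step 3 does not help, since neither expression involves $x_3$ or $x_5$. Your unspecified factor $c$ is exactly where the $x_1$-dependence sits: in fact $p(x_2x_4)=p(x_1)p(x_2)$, so $c=p(x_4)/p(x_1)$.

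The justification in step 1 is also false. $(x_2,x_4)$ cannot determine $x_1$, because $\mathbf{h}(24)=2a+b<3a+b=\mathbf{h}(N_5)$, and it would contradict the independence you just used. What does hold is that $(x_1,x_2,x_4)$ determines everything.

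The fallback fails too. It uses only $X_1,X_2,X_3$, and summing $p(x_2x_3)=p(x_1)p(x_2)$ over $x_3$ merely returns $\sum_{x_1}p(x_1)=1$. As you noted yourself, the restriction to $\{1,2,3\}$ is Mat\'{u}\v{s}-type, so nothing confined to those three variables can force uniformity.

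What you need is a second identity in which $x_1$ appears non-trivially. It comes from the conditional independences inside $\{2,3,4,5\}$, which your list omits. For example, $\mathbf{h}(24)+\mathbf{h}(25)=\mathbf{h}(2)+\mathbf{h}(245)$ holds on $F$, giving $p(x_{245})p(x_2)=p(x_{24})p(x_{25})$. Since $p(x_{245})=p(x_{N_5})=p(x_1)p(x_{24})$, this yields $p(x_{25})=p(x_1)p(x_2)$ at every support point. Because $X_1\perp(X_2,X_5)$, fixing $(x_2,x_5)$ lets $x_1$ range over all of $\mathcal{X}_1$. Hence $p(x_1)=p(x_{25})/p(x_2)$ is constant and $a=\log|\mathcal{X}_1|$.

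The paper's ``routine calculation'' $p(x_{N_5})=p(x_{134})=p^2(x_1)p(x_3)$ is an identity of exactly this kind. It amounts to $p(x_{34})=p(x_1)p(x_3)$, obtainable e.g.\ from $I(X_4;X_5|X_3)=0$. The paper then fixes $x_3$, varies $x_1$, and invokes Lemma \ref{component} to equate the masses of the two support points.
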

\begin{proof}
If $\mathbf{h}\in F$ is entropic, its characterizing random vector($X_i,i\in N_5$) satisfies the following infomation equalities,
\begin{align}
H(X_{N_5})&=H(X_{N_{5}-i}),i\in N_5\\
H(X_1)+H(X_i)&=H(X_1X_i), i\in N_5\\
H(X_{ik})+H(X_{jk})&=H(X_k)+H(X_{ijk}),\nonumber\\
i,j,k\in N_5,k\neq1, &\{ijk\}\neq\{123\},\{145\}\\ 
H(X_{i\cup K})+H(X_{j\cup K})&=H(X_K)+H(X_{ij\cup K}),\nonumber \\ 
i,j\in N_n, K\subseteq N_n&\backslash\{i,j\},|K|=2, \nonumber\\
K\neq\{24\},\{25\},\{34\},&\{35\},\{ij\}\neq\{23\},\{45\}\\
H(X_{i\cup K})+H(X_{j\cup K})&=H(X_K)+H(X_{ij\cup K}),\nonumber \\ 
i,j\in N_n,K\subseteq N_n\backslash\{i,j\}&,|K|=3,K\neq\{123\},\{145\}
\end{align}
For $(x_i,i\in N_n)\in \mathcal{X}_{N_n}$, with $p(x_{N_n})>0$, above information equalities imply the probability mass function satisfies
\begin{align}
p(x_{N_5})&=p(x_{N_{5}-i}),i\in N_5\\
p(x_1)p(x_i)&=p(x_1x_i), i\in N_5\\
p(x_{ik})p(x_{jk})&=p(x_k)p(x_{ijk}),\nonumber\\
i,j,k\in N_5,k\neq1, &\{ijk\}\neq\{123\},\{145\}\\ 
p(x_{i\cup K})p(x_{j\cup K})&=p(x_K)+p(x_{ij\cup K}),\nonumber \\ 
i,j\in N_n, K\subseteq N_n&\backslash\{i,j\},|K|=2, \nonumber\\
K\neq\{24\},\{25\},\{34\},&\{35\},\{ij\}\neq\{23\},\{45\}\\
p(x_{i\cup K})p(x_{j\cup K})&=p(x_K)p(x_{ij\cup K}),\nonumber \\ 
i,j\in N_n,K\subseteq N_n\backslash\{i,j\}&,|K|=3,K\neq\{123\},\{145\}
\end{align}
Routine calculation leads to 
\begin{align}
p(x_2=p(x_3)&=p(x_4)=p(x_5)\\
p(x_{N_{5}})&=p(x_{1234})=p(x_{134})\nonumber\\
&=p^2(x_1)p(x_3)
\end{align}
Note that $X_1$ and $X_3$ are independent. For any $x'_1\in\mathcal{X}_1$ with $x'_1\neq x_1$,
\begin{align}
p(x'_1,x'_3)=p(x'_1)p(x_3)>0.
\end{align}
As
\begin{align}
p(x'_1,x_3)=\sum_{x_2,x_4}p(x_{1234}),
\end{align}
there exists $x'_2\in \mathcal{X}_2$, $x'_4\in\mathcal{X}_4$ such that $p(x'_1,x'_2,x_3.,x'_4)>0$. By the same argument, we have
\begin{align}
p(x'_1,x'_2,x_3,x'_4)=p(x'_1)^2p(x_3).
\end{align}
For the tripartite graph $G=(V,E)$ with $V=\mathcal{X}_2\cup\mathcal{X}_3\cup\mathcal{X}_4$ and $(x_i,x_j)\in E$ if and only if $p(x_i,x_j)>0$, $i,j\in {234}$, $i\neq j$. By Lemma \ref{component}, the probablity mass of the triangles are equal in each connected component. Thus, we can get $p(x_1,x_2,x_3,x_4)=p(x'_1,x'_2,x_3,x'_4)$. Then, we can get 
\begin{align}
p(x_1)=p(x'_1).
\end{align}
For any $x'_1\in \mathcal{X}_1$, which implies that $X_1$ is uniformly distributed on $\mathcal{X}_1$, and so $H(X_1)=\log v$ and $v=|\mathcal{X}_1|$.

The "only if" part can immediately implied by Lemma \ref{h1h2} and the fact that $a = \log v$ on the ray $M_1^{123,145}$, and whole ray $U_{1,4}^{2345,5}$ are entropic.
\end{proof}
\addtolength{\topmargin}{0.01in}
For a polymatroid $P=(\mathbf{h},N_n)$ and a subset $S\subseteq N_n$, we say $P^S=(\mathbf{h}_S,S)$ with $\mathbf{h}_S=\mathbf{h}$ for all $A\subseteq S$ a restriction of $P$ on $S$. It can be seen that $P^S$ is a polymatroid in $\Gamma_S$. Now for a 2-dimensional face $(P_1,P_2)$ of $\Gamma_n$, we call $F_S=(P_1^S,P_2^S)$ is a $\mathit{restricted}$ $\mathit{face}$ of $(P_1,P_2)$ on $S$ if $(P_1^S,P_2^S)$ is a 2-dimensional face of $\Gamma_S$. 
\begin{theorem}\label{11}
Let $M$ be a matroid with $C$ the family of circuits, and it is in an extreme ray of $\Gamma_n$. If $F=(M,U_{1,n'}^n)$ is a 2-dimensional face of $\Gamma_n$, then the face $(M,U_{1,n'}^n)$ is
\begin{enumerate}
\item all-entropic if $M$ is rank-1 or restricted faces $F_C$ are all-entropic for $C\in\mathcal{C}$,
\item Mat\'{u}\v{s}-type if  
\begin{itemize}
	\item there exists $C\in\mathcal{C}$ such that $F_C$ is Mat\'{u}\v{s}-type; and
	\item restricted faces $F_{C'}$, for $C'\neq C\in \mathcal{C}$ are all entropic-type; and
	\item any loop of $M$ is a loop of $U_{1,n'}^n$;and
	\item $\mathbf{r}(M)=2$.
\end{itemize}  
\item Chen-Yeung-type if
\begin{enumerate} 
\item there exists $C\in\mathcal{C}$ such that $F_C$ is Chen-Yeung-type or,
\item \begin{itemize}
	\item there exists $C\in\mathcal{C}$ such that $F_C$ is Mat\'{u}\v{s}-type; and
	\item restricted faces $F_{C'}$, for $C'\neq C\in \mathcal{C}$ are all entropic-type; and
	\item there exists a loop of $M$ not a loop of $U_{1,n'}^n$ or any loop of $M$ is a loop of $U_{1,n'}^n$ but $\mathbf{r}(M)>2$.
\end{itemize} 
\end{enumerate}
\end{enumerate}
\end{theorem}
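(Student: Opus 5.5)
The plan is to reduce everything to restricted faces on circuits, using two monotonicity facts. First, restriction preserves entropicness: if $(a,b)$ is entropic on $F$, then its restriction to $C$ is entropic on $F_C$ (take the subvector $X_C$). Consequently, any outer bound valid on some $F_C$ transfers to $F$. Second, Lemma \ref{h1h2} together with $\chi_M$ supplies inner bounds: $(\log v,0)$ is entropic for $v\in\chi_M$, and the ray $U_{1,n'}^n$ is entropic. Hence every point with $a=\log v$, $v\in\chi_M$, $b\ge 0$ is entropic. This inner bound already delivers the entropic half of the Chen--Yeung-type definition. Since the restriction of $M$ to a circuit $C$ is $U_{|C|-1,|C|}$ and the restriction of $U_{1,n'}^n$ is some $U_{1,k}^{C}$ (with $k=|C\cap\alpha|$), each $F_C$ is one of the faces handled by Lemma \ref{u11}, Lemma \ref{u12}, Lemma \ref{u1n-1}, Lemma \ref{all}, or the Mat\'{u}\v{s} face $(U_{2,3},U_{1,2}^3)$.

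\textbf{Case 1 (all-entropic).} If $M$ has rank 1, Lemma \ref{all}(1) applies directly. Otherwise, suppose every $F_C$ is all-entropic. I would argue that this forces each loop of $U_{1,n'}^n$ that lies in a circuit to behave as a loop relative to the structure, so that the hypothesis of Lemma \ref{all}(2) holds. Concretely, if some $e\notin\alpha$ were a nonloop of $M$, pick a circuit $C\ni e$ (using connectivity, Lemma \ref{connected}). Then $F_C$ is $(U_{k-1,k},U_{1,j}^C)$ with $j<k$, which is Chen--Yeung-type by Lemmas \ref{u11}, \ref{u12} and \ref{u1n-1}, contradicting the assumption. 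Lemma \ref{all} then gives all-entropic.

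\textbf{Case 3(a) (Chen--Yeung-type via a Chen--Yeung circuit).} If some $F_C$ is Chen--Yeung-type, the outer bound $a=\log v$ on $F_C$ lifts to $F$ by restriction. Combined with the inner bound above, this shows $F$ is Chen--Yeung-type.

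\textbf{Cases 2 and 3(b) (a Mat\'{u}\v{s} circuit).} Here some $F_C$ is Mat\'{u}\v{s}-type, which by the earlier classification means $|C|=3$, $M^C=U_{2,3}$ and $|C\cap\alpha|=2$. The non-entropic region $a+b<\log\lceil e^a\rceil$ of $F_C$ lifts to $F$, giving the Mat\'{u}\v{s} outer bound. There are then two subcases.
\begin{enumerate}
\item Suppose $\mathbf{r}(M)=2$ and every loop of $M$ is a loop of $U_{1,n'}^n$. I would show, as in the proof of the $(M,U_{1,2}^n)$ theorem, that the other elements are parallel to the third element of $C$ or lie in $\alpha$ in a way that keeps all other $F_{C'}$ all-entropic. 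The Mat\'{u}\v{s} construction on $C$ (from \cite{matus2005piecewise}) is then extended by duplicating variables for parallel elements and adding constants for loops, which realizes the entropic region $a+b\ge\log v$, $\log(v-1)<a<\log v$.
\item Otherwise, either $\mathbf{r}(M)>2$ or some loop of $M$ lies in $\alpha$. If $\mathbf{r}(M)>2$, I would locate a circuit of size at least $4$ meeting both $\alpha$ and its complement, and apply Lemma \ref{u12} or Lemma \ref{u1n-1} to get Chen--Yeung-type. If instead a loop $\ell$ of $M$ lies in $\alpha$, I would restrict to $C\cup\{\ell\}$. There I expect an analogue of Lemma \ref{u11} (a $U_{2,3}$ plus an extra rank-one atom parallel to two of its points) to force $a=\log v$; the configuration $M_1^{123,145}$ of Lemma \ref{M1} is the template for such an argument.
\end{enumerate}

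The main obstacle is this second subcase. There the Chen--Yeung conclusion does not come from a single circuit restriction, since all circuits have Mat\'{u}\v{s} or all-entropic restrictions. One must instead restrict to a non-circuit set ($C$ plus a loop, or a union of two circuits as in $M_1^{123,145}$) and redo a probability-mass argument in the style of Lemma \ref{component} to force uniformity. I would try first to reduce every such $M$ to one of these two minimal configurations by contracting or deleting the irrelevant elements.
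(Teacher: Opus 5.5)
Your plan for the $\mathbf{r}(M)>2$ branch of item 3(b) has a genuine gap. You propose to find a circuit of size at least $4$ meeting both $\alpha$ and its complement and then apply Lemma~\ref{u12} or Lemma~\ref{u1n-1}. Such a circuit $C'$ would have restricted face $(U_{k-1,k},U_{1,j})$ with $k\ge 4$ and $1\le j<k$, which is Chen--Yeung-type. That puts you in item 3(a), and the hypotheses of 3(b) rule it out.

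Under those hypotheses, every circuit of size $\ge 4$ lies inside $\alpha$. Every circuit of size $\ge 3$ through the element $e\in C\setminus\alpha$ is a Mat\'{u}\v{s} $3$-circuit $\{e,x,y\}$ with $x,y\in\alpha$. So no single circuit restriction yields the bound $a=\log v$; your own last paragraph says exactly this, which contradicts the plan.

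The missing step is the paper's. Since $M$ is connected of rank $\ge 3$, its nonloops cannot all lie on one line through $e$. Hence there are two Mat\'{u}\v{s} $3$-circuits $\{e,2,3\}$ and $\{e,4,5\}$ on distinct lines. Their union has rank $3$, and $\{2,3,4,5\}$ is a $U_{3,4}$ inside $\alpha$. The restriction to these five elements is therefore exactly $(M_1^{123,145},U_{1,4}^{2345,5})$, with $e$ in the role of $1$, and this face is Chen--Yeung-type by Lemma~\ref{M1}. You do mention $M_1^{123,145}$, but you attach it to the loop case rather than to this one, where it is actually needed.

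Three smaller points.

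\emph{Contraction.} Your fallback of reducing by contraction is not available. Only restriction (passing to a subvector $X_S$) sends entropic points of $F$ to entropic points. Contraction corresponds to conditioning, and $H(X_A\mid X_e)$ is an average of entropy functions, so it need not keep the discrete constraint $a=\log v$.

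\emph{Loop case.} Restricting to $C\cup\{\ell\}$ is exactly the paper's step. The resulting face $(U_{2,3}^{12n,4},U_{1,3}^{12\ell,4})$ is already known to be Chen--Yeung-type \cite[Theorem 4]{liu2023entropy}, so you do not need a new analogue of Lemma~\ref{u11}.

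\emph{Mat\'{u}\v{s} subcase.} Duplicating variables works only for an element parallel to a point of $C$ with the same $\alpha$-membership; an $\alpha$-element parallel to $e$ would need entropy $a+b$, not $a$. The paper instead takes a uniform realization of $\log v\cdot\mathbf{r}_M$ on its support and reweights it by a distribution on $x_e$ of entropy $a$. That construction covers any rank-$2$ $M$ of this kind, for example $U_{2,4}$ with three $\alpha$-points, and it is why $v$ ranges over $\chi_M$.
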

\begin{proof}
If a matroid $M$ is non-entropic, then $(M,U_{1,n'}^n)$ is non-entropic. Thus, the matroid $M$ we discuss in the following proof is entropic.

If $M$ is rank-$1$, it is all-entropic by Lemma \ref{all}. For a matroid $M$ with $\mathbf{r}(M)\geq 2$ in an extreme ray and circuit $C\in \mathcal{C}$, we classify the restricted faces $F_C$ of $(M,U_{1,n'}^n)$ into the following three types,
\begin{enumerate}
\item all 2-dimensional faces $(U_{n-1,n},U_{1,n})$ for integer $n\geq 2$. By Lemma \ref{all}, they are all entropic.
\item $(U_{2,3},U_{1,2})$ is Mat\'{u}\v{s}-type \cite{matus2005piecewise}.
\item $(U_{2,3},U_{1,1})$, $(U_{n-1,n},U_{1,k})$ for positive integer $n,k$ with  $n\geq4$, $k<n$ and all uniform matroids. By Lemma \ref{u11}, \ref{u12} and \ref{u1n-1}, they are Chen-Yeung type.

\end{enumerate}
For 2-dimensional faces $(M,U_{1,n'}^n)$, we classify them into three distinct types by the restricted faces $F_C$ we classify above.
\begin{enumerate}

\item If all restricted faces $F_C$ are all-entropic, we can easliy see each element is a loop in $U_{1,n'}^n$ is also a loop in $M$. Thus, the 2-dimensional face $(M,U_{1,n'}^n)$ is all entropic by Lemma \ref{all}.

\item If there exists one circuit $C\in\mathcal{C}$ such that the restricted face $F_C$ is Chen-Yeung-type, the 2-dimensional face is Chen-Yeung-type. For $\mathbf{h}=a\mathbf{r}_M+b\mathbf{r}_{U_{1,n'}^n}$, restricting $\mathbf{h}$ on $C$, we can obtain $\mathbf{h_C}=a\mathbf{r}_1+b\mathbf{r}_2$. Then $\{(a,b): a=\log v,b\geq 0, v\in\mathbb{Z}\}$ forms an outer bound on $F^*$ by the fact that the restricted face $F_C$ is Chen-Yeung-type. The inner bound $\{(a,b):a=\log v,b\geq 0, v\in \chi_M$\} is immediately implied by Lemma \ref{h1h2} and $\chi_M$. Hence, $F$ is Chen-Yeung-type. 

\item If there exists one restricted face of Mat\'{u}\v{s}-type and all other restricted faces are all-entropic, and $F_C$ is Mat\'{u}\v{s}-type for $C\in\mathcal{C}$. Then $n'=n-1$ and $(M,U_{1,n-1}^n)$ can be classified into Chen-Yeung-type or Mat\'{u}\v{s} type according to whether there is a loop of $M$ that is not a loop of $U_{1,n-1}^n$ and whether $\mathbf{r}(M)=2$.
\begin{enumerate}

\item We first prove that if there exists a loop $i\in N_n$ of $M$, which is not a loop of $U_{1,n-1}^n$, then $(M,U_{1,n-1}^n)$ is Chen-Yeung-type. We assume $C=\{1,2,n\}$ be a circuit of $M$ and $n$ is a loop of $U_{1,n-1}^n$. For $\mathbf{h}=a\mathbf{r}_M+b\mathbf{r}_{U_{1,n'}^n}$, restricting $\mathbf{h}$ on $\{1,2,n,i\}$, we can obtain $\mathbf{h}'=a\mathbf{r}'_1+b\mathbf{r}'_2$, where $\mathbf{r}'_i$, $i=1,2$ are the rank functions of $U_{2,3}^{12n,4}$ and $U_{1,3}^{12i,4}$ on $\{1,2,n,i\}$, respectively. Thus, by the fact $(U_{2,3}^{12n,4},U_{1,3}^{12i,4})$ is Chen-Yeung-type\cite[Theorem 4]{liu2023entropy}, $\{(a,b): a=\log v,b\geq 0, v\in\mathbb{Z}\}$ forms an outer bound on $F^*$. The inner bound $\{(a,b):a=\log v,b\geq 0, v\in \chi_M$\} is immediately implied by Lemma \ref{h1h2} and $\chi_M$. Hence, $F$ is Chen-Yeung-type. 

\item If any loop of $M$ is a loop of $U_{1,n-1}^n$ and $\mathbf{r}(M)=2$, we prove $(M,U_{1,n-1}^n)$ is Mat\'{u}\v{s}-type. Restricting $\mathbf{h}$ on $C$, we can obtain $\mathbf{h}_C=a\mathbf{r}'_1+b\mathbf{r}'_2$. The outer bound on $(M,U_{1,n-1}^n)$ is $a+b=\log \lceil e^a \rceil$ by the fact that the outer bound of $(U_{2,3},U_{1,2}^3)$ is $a+b=\log \lceil e^a \rceil$. Let $\mathbf{Y}_n$ be a random vector whose entropy function is $\log v\cdot \mathbf{r}_M$, $v\in\chi_M$. As $\mathbf{r}_M= 2$, according to \cite[Theorem 1]{chen2021matroidal}, $\mathbf{Y}_n$ is uniformly distributed on its support $S$. Let $\mathbf{X}_n=(X_i:i\in N_n)$ be a random vector distributed on $S$ with $H(X_n)=a$, and $p(x_{N_n})=\frac{p(x_n)}{v}$, for any $X_N\in S$. It can be checked that the entropy function of such constrcuted $\mathbf{X}_n$ is $a+b=\log v$ for $v\in \chi_M$. Then by Lemma \ref{h1h2} and the fact that the whole ray $U_{1,n'}^n$ is entropic, the inner bound of $(M,U_{1,n'}^n)$ is $a+b>\log v$ and $\log (v-1)<a\leq\log v$ for integer $v\in \chi_M$. Hence, $F$ is Mat\'{u}\v{s}-type.

\item If any loop of $M$ is a loop of $U_{1,n-1}^n$ and $\mathbf{r}(M)>2$, we prove $(M,U_{1,n-1}^n)$ is Chen-Yeung-type. There exists $C_1\in\mathcal{C}$ such that $F_{C_1}$ is Mat\'{u}\v{s}-type and we may assume that $1\in{C_1}$ and 1 is a loop in $U_{1,k}$ but not a loop in $M$. Since $\mathbf{r}(M)>2$, by Lemma \ref{u11}, \ref{u12}, \ref{u1n-1}, \ref{all}, any $F_{C'}$ is ${U_{n'-1,n'},U_{1,n'}}$ for $C'\in \mathcal{C}$ and $\mathbf{r}_M(C')=n'>2$, which implies 1 does not appear in any $C'\in\mathcal{C}$ with $\mathbf{r}_M(C)>2$. Furthermore, $M$ is a connected matroid, thus any $F_{C_2}$ is Mat\'{u}\v{s}-type for  $C_2\in \mathcal{C}$ and $1\in C_2$. Based on $\mathbf{r}(M)>2$, we can deduce that there exist $C_3, C_4\in\mathcal{C}$ (both contain 1) such that $|C_3\cup C_4|=5$ and $\mathbf{r}(C_3\cup C_4)=3$. Then we can easily find the restricted face $F_{C_3\cup C_4}$ is $(M_1^{123,145},U_{1,4}^{2345,5})$, then $\{(a,b): a=\log v,b\geq 0, v\in\mathbb{Z}\}$ forms an outer bound on $F^*$ by the fact that the restricted face $F_{C_3\cup C_4}$ is Chen-Yeung-type by Lemma \ref{M1}. The inner bound $\{(a,b):a=\log v,b\geq 0, v\in \chi_M$\} is immediately implied by Lemma \ref{h1h2} and $\chi_M$. Hence, $F$ is Chen-Yeung-type. 
\end{enumerate}
\end{enumerate}
\end{proof}
Theorem \ref{11} immediately implies the following theorem.
\begin{theorem}
	For matroid $M$ in an extreme ray, if $(M,U_{1,n'}^n)$ is a 2-dimensional face of $\Gamma_n$, then the face $(M,U_{1,n'}^n)$ is one of the four types in Definition \ref{d1}.
	\end{theorem}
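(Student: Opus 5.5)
This theorem is a corollary of Theorem \ref{11}. The plan is to check that the hypotheses of the three cases of Theorem \ref{11}, together with the non-entropic case handled at the start of its proof, exhaust all possibilities for a 2-dimensional face $(M,U_{1,n'}^n)$ with $M$ in an extreme ray.

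First I split on whether $M$ is entropic. If no positive multiple of $\mathbf{r}_M$ is entropic, then $(M,U_{1,n'}^n)$ is non-entropic. Indeed, if some $(a,b)$ with $a,b>0$ were entropic, then any characterizing vector $(X_i)$, restricted to the coordinates where $U_{1,n'}^n$ vanishes, would have to reproduce $a\mathbf{r}_M$ there. The cleaner route is the remark following Definition \ref{d1}, which asserts this via Lemma \ref{h1h2}; I will invoke it directly. If $M$ has rank $1$, Lemma \ref{all} gives all-entropic.

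So I may assume $M$ is entropic and $\mathbf{r}(M)\ge 2$. By Lemma \ref{connected}, $M$ with its loops deleted is connected, so every non-loop lies in a circuit. For each circuit $C$, the restriction of $M$ to $C$ is $U_{|C|-1,|C|}$, and the restriction of $U_{1,n'}^n$ to $C$ is $U_{1,k}$ on $k=|C\cap\alpha|$ of the elements. The classification inside the proof of Theorem \ref{11}, which relies on Lemmas \ref{u11}, \ref{u12}, \ref{u1n-1}, \ref{all} and \cite{matus2005piecewise}, shows that each restricted face $F_C$ is all-entropic, Mat\'{u}\v{s}-type or Chen-Yeung-type.

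I then case on the multiset of types of the $F_C$:
\begin{enumerate}
\item If some $F_C$ is Chen-Yeung-type, case 3(a) of Theorem \ref{11} applies.
\item If every $F_C$ is all-entropic, case 1 applies.
\item Otherwise some $F_C$ is Mat\'{u}\v{s}-type and none is Chen-Yeung-type. Here I split on two conditions: whether some loop of $M$ is not a loop of $U_{1,n'}^n$, and whether $\mathbf{r}(M)=2$. These are complementary, so the situation lands either in case 2 or in case 3(b).
\end{enumerate}
Each of these cases yields one of the four types of Definition \ref{d1}.

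The main obstacle is the third case. Theorem \ref{11}, as stated, requires that all circuits other than one distinguished $C$ give all-entropic restrictions. Yet several circuits might give Mat\'{u}\v{s}-type restrictions while none gives a Chen-Yeung-type one.

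To handle this, I would argue that the Mat\'{u}\v{s}/Chen-Yeung dichotomy in the proof of case 3 of Theorem \ref{11} never actually used uniqueness of $C$. The Mat\'{u}\v{s} outer bound comes from restricting to any one Mat\'{u}\v{s}-type circuit. The Chen-Yeung outer bound comes from the four-element restriction $(U_{2,3}^{12n,4},U_{1,3}^{12i,4})$ or from $(M_1^{123,145},U_{1,4}^{2345,5})$ of Lemma \ref{M1}, and both of those arguments go through unchanged in the presence of other Mat\'{u}\v{s}-type circuits. The inner bounds come from Lemma \ref{h1h2} together with the uniform-support construction, and these do not depend on uniqueness either. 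So I would restate case 3 as "some $F_C$ is Mat\'{u}\v{s}-type and no $F_C$ is Chen-Yeung-type" and rerun its three subcases verbatim. That completes the exhaustion and proves the theorem.
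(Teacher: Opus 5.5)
Your route is the paper's: its entire proof is the sentence that Theorem \ref{11} ``immediately implies'' the statement. You are right that this deduction is not literal. For instance, $(U_{2,4},U_{1,3}^4)$, which the paper's own Example calls Mat\'{u}\v{s}-type, has three Mat\'{u}\v{s}-type circuits $\{1,2,4\}$, $\{1,3,4\}$, $\{2,3,4\}$, and so meets none of the hypotheses of Theorem \ref{11}. Moreover, the paper's own argument for $\mathbf{r}(M)>2$ produces two Mat\'{u}\v{s}-type circuits $C_3,C_4$, so it is really carried out under your weaker hypothesis. Reading case 3 as ``some $F_C$ is Mat\'{u}\v{s}-type and none is Chen--Yeung-type'' is the right repair. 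The outer bounds, the subcase with a loop of $M$ inside $\alpha$, and the subcase $\mathbf{r}(M)>2$ survive it.

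Three steps of your write-up nevertheless fail as stated. First, the non-entropic reduction runs backwards. Lemma \ref{h1h2} only carries entropic points of the ray of $M$ into the face. So the remark after Definition \ref{d1} says that a non-entropic face forces $M$ to be non-entropic, which is the converse of what you use. Restricting to $N_n\setminus\alpha$ does not give it either, since that set is empty when $\alpha=N_n$. The paper merely asserts the implication you need at the start of the proof of Theorem \ref{11}. You need it only when every restricted face is all-entropic: if $\chi_M=\emptyset$, the entropic clauses in the Mat\'{u}\v{s} and Chen--Yeung definitions are vacuous, so the outer bounds alone classify the other cases.

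Second, the inner bound in the subcase $\mathbf{r}(M)=2$ does not carry over verbatim. The construction relies on ``$n'=n-1$'' and reweights the single coordinate $x_n$. This breaks down, e.g., for $(M,U_{1,2}^4)$ with $M$ of rank $2$ and $3,4$ parallel, which has two Mat\'{u}\v{s}-type circuits and $n'=n-2$. You must show two things:
\begin{itemize}
\item The non-loops of $M$ outside $\alpha$ form a single parallel class $P$. If a non-loop $m\notin\alpha$ were not parallel to $n$, then for a Mat\'{u}\v{s}-type circuit $\{n,x,y\}$ one of $\{n,m,x\}$, $\{n,m,y\}$ is a circuit meeting $\alpha$ in one element. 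That gives a Chen--Yeung-type restriction $(U_{2,3},U_{1,1})$.
\item $P\cap\alpha=\emptyset$. Otherwise let $\beta=\alpha\setminus P$. Then $\mathbf{r}_{U_{1,|\beta|}^{\beta,n}}$ satisfies every elemental equality shared by $\mathbf{r}_M$ and $\mathbf{r}_{U_{1,n'}^n}$ but vanishes on $P\cap\alpha$. So it lies in the smallest face containing both rays yet outside their cone, contradicting 2-dimensionality.
\end{itemize}
Then reweight by the common value of the coordinates in $P$.

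Third, your list of types for $F_C$ omits circuits $C$ for which $F_C$ is not a 2-dimensional face (e.g.\ $C\cap\alpha=\emptyset$). If no circuit yields a restricted face, ``every $F_C$ is all-entropic'' holds vacuously and your second case calls the face all-entropic. But $(U_{2,3}^{234,4},U_{1,1}^4)$ is Chen--Yeung-type by Theorem \ref{u11cy}. A circuit $C$ with $|C|\ge 3$ and $C\cap\alpha=\emptyset$ restricts to $a\mathbf{r}_{U_{|C|-1,|C|}}$, which forces $a=\log v$. Such circuits must therefore be counted with the Chen--Yeung-type ones.

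With these repairs, and granting the paper's unproved assertion in the non-entropic case, your exhaustion argument goes through.
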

\bibliographystyle{IEEEtran}
\bibliography{refe.bib}

\end{document}